\newif\ifabstract
\newif\iffull
\newtheorem{theorem}{Theorem}[section]
\newtheorem{lemma}{Lemma}
\newcommand{\Exp}{\mathbb{E}}
\newtheorem{remark}{\textbf{Remark}}
\newcommand{\fixWideWidth}{0.49}
\newcommand{\fixWideHeight}{0.16}
\newcommand{\threesixty}{\protect 360\textdegree\xspace}
\newcommand{\algName}{\texttt{BOLA360}\xspace}
\newcommand{\tABR}{\texttt{ABR360}\xspace}
\newcommand{\AF}{\texttt{FOV}\xspace}
\newcommand{\VAts}{\texttt{VA-360}\xspace}
\newcommand{\PDash}{\texttt{ProbDASH}\xspace}
\newcommand{\SalVR}{\texttt{Salient-VR}\xspace}
\newcommand{\Flare}{\texttt{Flare}\xspace}
\newcommand{\Pano}{\texttt{Pano}\xspace}
\newcommand{\Mosaic}{\texttt{Mosaic}\xspace}
\newcommand{\hrPL}{\texttt{BOLA360-PL}\xspace}
\newcommand{\hrREP}{\texttt{BOLA360-REP}\xspace}
\newcommand{\addcite}[0]{\ifthenelse{\boolean{showcomments}}
	{\textcolor{blue}{~(add cite(s)) } }{}}
\newcolumntype{L}[1]{>{\raggedright\let\newline\\\arraybackslash\hspace{0pt}}m{#1}}
\newcolumntype{C}[1]{>{\centering\let\newline\\\arraybackslash\hspace{0pt}}m{#1}}
\newcolumntype{R}[1]{>{\raggedleft\let\newline\\\arraybackslash\hspace{0pt}}m{#1}}
\definecolor{Green}{rgb}{0.0, 0.4, 0.0}
\newcommand{\revised}[1]{{#1}} 
\newcommand{\camRev}[1]{#1}
\newcommand{\TOMM}[1]{\textcolor{black}{#1}}
\begin{document}

\title{BOLA360: Near-optimal View and Bitrate Adaptation for 360-degree Video Streaming}

\author{
Ali~Zeynali\thanks{University of Massachusetts Amherst. Email: {\tt azeynali@cs.umass.edu}.} \and
Mahsa~Sahebdel\thanks{University of Massachusetts Amherst. Email: {\tt msahebdelala@umass.edu}.} \and
Mohammad~Hajiesmaili\thanks{University of Massachusetts Amherst. Email: {\tt hajiesmaili@cs.umass.edu}.} \and
Ramesh~K.~Sitaraman\thanks{University of Massachusetts Amherst \& Akamai Technologies. Email: {\tt ramesh@cs.umass.edu}.} \and
}

\begin{titlepage}
\maketitle

\thispagestyle{empty}




\begin{abstract}
Recent advances in omnidirectional cameras and AR/VR headsets have spurred the adoption of 360\textdegree\xspace videos, which are widely believed to be the future of online video streaming. 360\textdegree\xspace videos allow users to wear a head-mounted display (HMD) and experience the video as if they are physically present in the scene. Streaming high-quality 360\textdegree\xspace videos at scale is an unsolved problem that is more challenging than traditional (2D) video delivery. The data rate required to stream 360\textdegree\xspace videos is an order of magnitude more than traditional videos. Further, the penalty for rebuffering events where the video freezes or displays a blank screen is more severe as it may cause cybersickness. We propose an online adaptive bitrate (ABR) algorithm for 360\textdegree\xspace videos called \algName that runs inside the client's video player and orchestrates the download of video \revised{tiles} from the server to maximize the quality-of-experience (QoE) of the user. \algName conserves bandwidth by downloading only those video \revised{tiles} that are likely to fall within the field-of-view (FOV) of the user. In addition, \algName continually adapts the bitrate of the downloaded video \revised{tiles} so as to enable a smooth playback without rebuffering. We prove that \algName is near-optimal with respect to an optimal offline algorithm that maximizes QoE. Further, we evaluate \algName on a wide range of network and user head movement profiles and \camRev{ show that it provides $6\%$ to $110\%$ improvements to the QoE of state-of-the-art algorithms.} While ABR algorithms for traditional (2D) videos have been well-studied over the last decade, our work is the first ABR algorithm for 360\textdegree\xspace videos with {\em both} theoretical and empirical guarantees on its performance.

\end{abstract}

\end{titlepage}


\maketitle

\section{Introduction}
\label{sec:introduction}



With recent advancements in omnidirectional cameras and AR/VR headsets, users can enjoy \threesixty media like YouTube~360 \cite{youtube360}, virtual and augmented reality 
applications \cite{psvr,gARVR}.
 Users either wear a head-mounted display (HMD) or use a device that allows them to change their viewport and field-of-view (\AF)\footnote{Field of view is the spatial area that falls within the viewport of the user's device. A user sees only the portion of the \threesixty video that is within the \AF.} when watching a \threesixty video (see Figure~\ref{fig:omni_dist}). For instance, a user watching World Cup soccer as a \threesixty video can wear an HMD and watch the game by changing their head position as if they were actually in the stadium.

The rapid increase in the popularity of \threesixty videos is partly driven by the wide availability of VR headsets that has grown more than five-fold in the past five years to reach nearly 100 million units in use~\cite{cicso_vni}.  A second trend driving the popularity of \threesixty videos is the wide availability of omnidirectional cameras that make it easy to create \threesixty video content.  While the promise of providing an immersive experience has made  \threesixty videos the holy grail of internet video streaming~\cite{ZinkSN19}, providing a high quality-of-experience to users while {\em delivering those videos at scale over the internet} is a major unsolved problem and is the main motivation of our work.

\begin{figure}[!t]
\centering
    \includegraphics[width=0.7\linewidth]{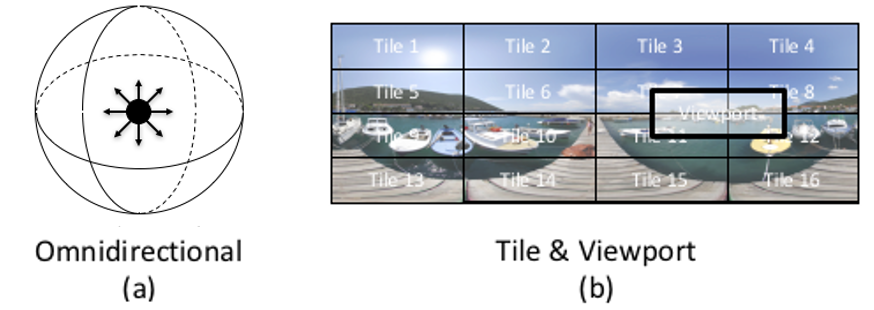}
   \caption{(a)  Users watch \threesixty videos by moving their viewport to point to any direction in the enclosing sphere (b) \revised{each frame of the \threesixty video is broken up into tile frames}~\cite{ZinkSN19}. }
   \label{fig:omni_dist}
\end{figure}

{\bf Tiled video delivery. } \revised{A common approach to deliver \threesixty video from server to users (i.e., client) is to divide the entire \threesixty video into same duration chunks of length $\delta$. Then, each chunk is spatially split into a set of tiles to fully cover the viewing sphere of the user (see Figure~\ref{fig:omni_dist}).} Each tile is encoded in multiple bitrates (i.e., resolutions) so that the quality of the tiles sent to the user can be adapted to the available bandwidth between the server and the client, a feature known as ``adaptive bitrate streaming''. Video tiles are streamed ahead of time and buffered at the client before they can be rendered to the user. As the user changes their viewport
the appropriate tiles within the user's  \AF  is extracted from the client's buffer and rendered on the user's display.


{\bf Challenges of \threesixty video delivery.} A key challenge in delivering \threesixty videos is that they are an order of magnitude larger in size than traditional (2D) videos~\cite{nguyen2018predictive, mangiante2017vr, dasari2020streaming}. \threesixty videos require multiple tiles to cover the entire viewing sphere, each encoded in multiple bitrates akin to 2D videos. Further, a high resolution of 4K to 8K is recommended for viewing AR/VR media \cite{mangiante2017vr}. Thus, the data rate of a \threesixty video that delivers a 4K stream for \camRev{eight tiles} and allows the user to watch the full \threesixty viewing sphere is 200 Mbps, compared to about 25 Mbps for a traditional 4K video. In fact, the data rate of such a \threesixty video is an order of magnitude larger than the US's average last-mile bandwidth \cite{usnetworkAkamai2017, usnetwork2021}. Additionally, when the user's viewport changes, say due to a head movement, the new tiles that fall within the user's new \AF must be rendered within a latency of a few tens of milliseconds so as to not cause a {\em rebuffering event} that results in showing either an incorrect/stale tile or no tile at all (i.e., blank screen). If the ``motion-to-photon'' latency exceeds a few tens of milliseconds, the user experiences a degraded quality-of-experience, or even cybersickness \cite{ZinkSN19}.

{\bf Adaptive Bitrate (ABR) for 360° Videos.} We investigate ABR algorithms for handling the challenges posed by the large size of 360° videos.
While ABR algorithms for traditional 2D videos have been extensively studied over the past decade \cite{kua2017survey, han2020vivo, mao2017neural, yin2015control, zhang2021sensei, spiteri2020bola, kim2020neural}, ABR algorithms for 360° videos are notably more complex. They must perform both "view adaptation" by predicting the user's head position and potential future tile views, and "bitrate adaptation" by determining appropriate bitrates for downloading tiles. Importantly, these two adaptations are jointly optimized to prioritize higher bitrates for tiles more likely to be in the user's viewport. Note that this challenge is different from tile scheduling problem which determines the download ordering of tiles~\cite{ghabashneh2023dragonfly}.

{\bf Challenges of Naive ABR Solutions.} Naive ABR algorithms equally distribute the available
bandwidth among all tiles,
resulting in downloading the same
 tiles for each chunk. \revised{While this approach prevents rebuffering by having the entire tiles of a chunk, it leads to suboptimal video quality. An alternative approach predicts the tiles the user is likely to watch and downloads only those tiles, reducing the number of downloaded tiles and allowing for higher quality.} However, this approach is susceptible to rebuffering if the user unexpectedly switches to unpredicted tiles
\cite{li2019very, fan2017fixation, xu2018gaze, xu2018predicting, ban2018cub360, xu2018tile, feng2020livedeep, feng2021liveroi}. Our proposed approach offers a provably near-optimal solution by striking a balance between these naive extremes, achieving both high quality and reduced rebuffering.



{\bf Our Contributions.}
We leverage Lyapunov optimization techniques to achieve {\em both} high bitrates and low rebuffering by judiciously downloading higher-quality tiles for tiles that are more likely to be in the \AF of the users, while using lower-quality tiles for the rest of the tiles as a hedge against rebuffering. Our algorithm, \algName, is a near-optimal ABR algorithm for \threesixty videos that also empirically performs better than state-of-the-art algorithms. We make the following specific contributions.

{\bf 1) } We frame the optimization of quality-of-experience (QoE) for \threesixty videos as the \tABR problem. We model QoE as a weighted sum of two terms, one term relates to the quality (i.e., bitrate) of the video tiles viewed by the user, and the other term relates to continuous video playback without rebuffers. 

\TOMM{\textbf{2)} We present an optimal offline solution to the \tABR problem, which establishes an upper bound on the achievable QoE by any online algorithm. While this offline algorithm is impractical for real-world use, it serves as a benchmark for comparing the QoE performance of online algorithms in our experimental analysis.}

\textbf{3)} We present \algName \footnote{\TOMM{A preliminary version of this article appeared at ACM MMSys 2024~\cite{zeynali2024bola360}.}}, an algorithm that finds a near-optimal solution for \tABR in an online manner without the future knowledge of inputs. In each round, \algName selects a suitable bitrate for each tile based on the current buffer utilization. Further, there are multiple parameters in \algName that could be tuned to improve the performance under different conditions and environments. 

\textbf{4)} We analyze \algName's performance, demonstrating that (i) it never exceeds the client's buffer capacity (Theorem~\ref{thm:buffer_size}) and (ii) its average QoE is within a small additive constant factor of the offline optimum of \tABR(Theorem~\ref{thm:alg_perf}), the additive factor goes to zero when the buffer size goes to infinity. Additionally, considering the {\em playback delay} – the time between tiled download and rendering, our analysis reveals a tradeoff between playback delay and \algName's QoE, i.e., one needs to tolerate a longer playback delay to achieve better QoE (Remark~\ref{rem:conflict}).


{\bf 5)} We implement \algName on a simulation testbed and evaluate its performance using  both real and synthetic data traces. Using trace-based simulations, we compare \algName with state-of-the-art algorithms used in \VAts~\cite{ozcinar2017viewport}, \PDash~\cite{xie2017360probdash},  \SalVR~\cite{wang2022salientvr}, \Flare~\cite{qian2018flare}, \Pano~\cite{guan2019pano}, and \Mosaic~\cite{park2021mosaic}.
Our results show that in comparison with QoE of the best alternative ABR algorithm, \camRev{on average \algName provides $6\%$ improvements over 14 real network profiles (Figure~\ref{fig:network_QoE_bitrate}) and $9\%$ improvements over} 12 different head position probability distributions (Figure~\ref{fig:prob_QoE_bitrate}).

{\bf 6)}  Finally, we explore two extensions to \algName, addressing specific real-world scenarios~\cite{spiteri2019theory}. While \algName already demonstrates impressive QoE, average bitrate, and rebuffering performance, further enhancements can be achieved by introducing heuristics on top of its core design.
We introduce 
\hrPL and \hrREP, each targeting specific limitations of the original algorithm. Our experiments show 
\hrPL reduces reaction time by up to $67.8\%,$ while \hrREP enhances both playing bitrate and reaction time by $91.2\%$ and $80.0\%,$ particularly when combined with short-term head position predictions. These heuristics offer efficient and practical solutions, surpassing the original algorithm's performance.


{\bf Roadmap.} The paper is structured as follows: \revised{First, we investigate the background of \threesixty video streaming in Section~\ref{sec:background}.} Then, we present the system model and formulate the \tABR problem in Section~\ref{sec:system_model}. Next, In Section~\ref{sec:offline}, we present an optimal offline solution for the problem of bitrate adaptation for \threesixty video streaming. In Section~\ref{sec:online}, we develop \algName using a Lyapunov optimization approach, proving its near-optimality. Section~\ref{sec:exp} evaluates the performance \algName against state-of-the-art algorithms. In Section~\ref{sec:heuristics}, we introduce two enhancements for \algName which practically improves its performance. The related work is discussed in Section~\ref{sec:relwork}, and we conclude in Section~\ref{sec:conc}.

\section{Background}
\label{sec:background}

{\bf ABR Algorithm for \threesixty Videos.} Tile-based \threesixty videos temporally slice the video into chunks. \revised{Each chunk is split into multiple tiles to cover the entire \threesixty spatial area. Usually, each tile is encoded in multiple quality levels or bitrates for video streaming.} The ABR algorithm for \threesixty video has to select the bitrate of a tile before downloading it. So, the action of the online ABR algorithm for each chunk is a list of selected bitrates for each tile.

{\bf Field of View [\texttt{FOV}].} A \threesixty video is encoded in the full \threesixty visual sphere. However, the human eye's field of vision  covers about 130\textdegree\cite{ratcliff2020thinvr}. Therefore, the user interacting with the \threesixty video cannot see the entire spatial area of the presented video. The part of the \threesixty video inside the user's visible region is called \texttt{Field of View} or \AF. Figure \ref{fig:background_FOV} shows an example of a
\AF that consists a subset of tiles of the full sphere of the \threesixty video seen by the user. We use the term \texttt{view} to refer to the group of tiles inside the \AF. When the user interacts with \threesixty video with a VR headset, the user can arbitrarily change the \AF and view by moving their head.

\begin{figure}
    \centering
    \includegraphics[width=0.5\linewidth, height=0.25 \linewidth]{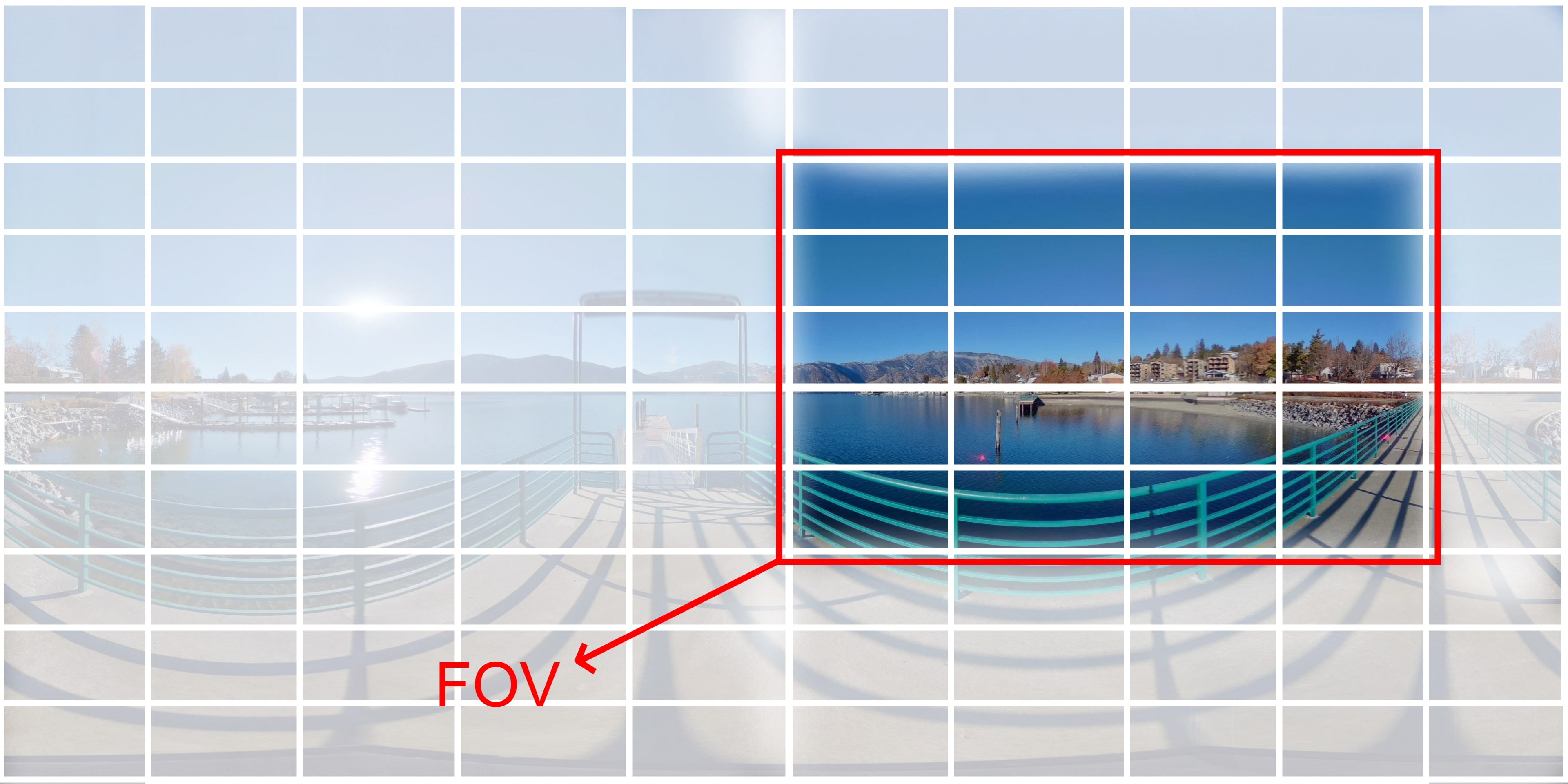}
    \caption{One shot from the entire spatial area of \threesixty video and \AF of user in that}
    \label{fig:background_FOV}
    \vspace{-3mm}
\end{figure}

\revised{{\bf Buffer Occupancy based Lyapunov Algorithm.} BOLA~\cite{spiteri2019theory} is an ABR algorithm optimized for single-tile 2D video streaming, using buffer occupancy in bitrate selection. In \threesixty video streaming, besides bandwidth uncertainty, additional factors like user head direction and \AF introduce complexity. Due to these added challenges and uncertainties unique to \tABR, traditional 2D ABR algorithms cannot be directly applied to effectively solve \tABR.}

\section{System Model and Problem Formulation}
\label{sec:system_model}
In this section, we present the system model and problem formulation for the online \threesixty video bitrate selection problem.

{\bf The \threesixty Video Model.} We consider a \threesixty video as a sequence of $K$ \textit{chunks}, where each chunk represents $\delta$ seconds of the playback time. \revised{Each chunk is further partitioned into $D$ \textit{tiles} to cover the entire \threesixty spatial area.} Each tile is encoded in $M$ different \textit{bitrates}, all of which are available at the server; the higher the bitrate, the larger the size in bits. Let $S_m$ denote the size \revised{(bits)} of a tile with bitrate \revised{index} $m$. We define $v_m$ as the utility value the user gets by watching a tile with bitrate index $m$. Therefore, we have the following inequality.
\begin{equation*}
   S_1 \leq S_2 \leq ... \leq S_M \Leftrightarrow   v_1 \leq v_2 \leq ... \leq v_M.
\end{equation*}
During the playback time of each chunk, the user views only tiles inside their \AF. The bitrate of tiles inside the \AF directly impacts the QoE. Downloading tiles which falls out of \AF wastes the bandwidth capacity. A key challenge is that the \AF is unknown to the bitrate selection algorithm at download time. As a result, the online bitrate selection algorithm must predict the \AF and download tiles based on its prediction. Let $p_{k, d}$ denote the probability of the tile $d$ is inside \AF while playing $k^{th}$ chunk. We assume that these probability values are given from a prediction based on the previous user's watching the video \cite{bao2016shooting, li2019very, chen2020sparkle, wang2022salientvr, park2019navigation}, or from a chunk analysis of the content \revised{combined with points probability analysis of \threesixty sphere \cite{xie2017360probdash,qian2016optimizing,xie2018cls}. For simplicity, we assume that the probability values of tiles within a chunk are normalized, such that $\sum_{d=1}^{D}p_{k,d} = 1$.}


\begin{table}[!t]
	\caption{\TOMM{Summary of important notations.}}
	\label{tbl:not_1}
	\begin{center}
		\begin{tabular}[P]{|c|L{9.2cm}|}
			\hline
			\textbf{Notation} & \textbf{Description} \\
			\hline \hline
			$K$ & Number of chunks\\

            $D$ & Number of tiles\\

            $M$ & Number of available bitrates\\

            $\delta$ & Length of a chunk\\

            $S_m$ & Size of a tile with bitrate index $m$\\

            $v_m$ & Utility value of watching a tile with bitrate index $m$\\

            $p_{k,d}$ & The probability of the tile $d$ is inside \AF while playing chunk $k$\\

            $T_{end}$ & Streaming duration\\

            $T_k$ & Time interval between finishing downloading chunks $k-1$ and $k$\\

            $\gamma$ & Relative importance of the two terms in user's QoE\\

            $Q(t)$ & Buffer level at time $t$\\

            $Q_{\max}$ & Buffer capacity\\

            $a_{k,d,m}$ & Decision variable for bitrate index $m$ of tile $d$ of chunk $k$\\

            $n_k$ & Average number of tiles downloaded for chunks played during the downloading of chunk $k$\\
			\hline 
		\end{tabular}
	\end{center}
\end{table}

\paragraph{Problem Formulation.}
In what follows, we formulate \tABR, an online optimization problem for the bitrate and view adaptation of \threesixty video streaming. In \tABR, the objective is to maximize the expected QoE of the user, including two terms: 1) the utility term that is related to quality of the video watched by the user, and 2) the smoothness of streaming term that captures continuous playback without rebuffering. The first term directly depends on the bitrate downloaded by the streaming algorithm, i.e., the higher the bitrate, the higher the utility. The second term captures the expected smoothness of video streaming. Rebuffering happens when at least one of the tiles inside \AF is not completely downloaded during playback time. Note that the above two terms conflict with each other. To maximize the utility, an ABR algorithm must download the highest possible bitrate tiles. However, to maximize the expected continuous smooth playback, the ABR algorithm must download low-bitrate tiles. Thus, to maximize the sum of both terms, the ABR algorithm must balance the two conflicting requirements.

We now formulate QoE mathematically to capture the utility as the sum of the two terms $U_K$ and $R_K$. The first term $U_K$ represents the time-average expected playback utility the video player prepares for the user over the sequence of \revised{chunks} and is defined as

\begin{equation}
\label{eq:UK}
    U_K =\frac{\sum_{k=1}^{K} \sum_{d=1}^{D} \sum_{m=1}^{M} \Exp \{  a_{k, d, m} \cdot p_{k, d} \cdot  v_m  \} }{\Exp \{ T_{\text{end}} \}},
\end{equation}

where $T_{\text{end}}$ is the time the video player finishes playback of the last chunk, and $a_{k, d, m}$ is a binary optimization variable in the \tABR problem: $a_{k, d, m}=1$ \revised{if bitrate index $m$ is selected for tile $d$ of chunk $k$; 0, otherwise.} The second QoE term is denoted by $R_K$, which targets the playback smoothness as follows.

\begin{equation}
\label{eq:RK}
    R_K =\frac{\sum_{k=1}^{K} \sum_{d=1}^{D} \sum_{m=1}^{M} \Exp \{  a_{k, d, m} \delta  \} }{\Exp \{ T_{\text{end}} \}}.
\end{equation}

That is, $R_K$ represents the ratio of the expected playback duration of downloaded tiles to the streaming duration. A low $R_k$ when $T_{\text{end}}$ greatly exceeds tiles playback duration (numerator) can lead to rebuffering, making a high $R_k$ indicative of continuous playback. Unlike $U_k$, $R_K$ inversely correlates with download time (or bitrate), decreasing with higher bitrates. Expectations in Equation~\eqref{eq:UK} and Equation~\eqref{eq:RK} are computed over the possible randomized decisions or outcomes of the ABR algorithm solving \tABR.

Let $t_k$ denotes the time the video player completes the download of tiles that belong to chunk $k-1$ and decides about the bitrate of tiles for $k^{th}$ chunk. And $T_k$ shows the time interval between finishing downloading chunks $k-1$ and $k$, i.e., $T_k = t_{k+1} - t_k$. We use the coefficient $\gamma > 0$ to set the relative importance of the two terms in the user's final QoE, i.e.,  $\gamma$ provides an opportunity to tune the relative importance of high-bitrate streaming with respect to a continuous streaming experience.
We formulate the \tABR problem as follows. 
\begin{subequations}
\label{eq:problem_form}
    \begin{eqnarray}
    \label{eq:problem_maximization}
      [\tABR] \qquad &\max &  U_K + \gamma R_K \\
      \label{eq:problem_1bitrate}
      &\textrm{s.t.,}& \sum_{m = 1}^{M} a_{k, d, m} \leq 1,\quad \forall d, k,\\
      \label{eq:problem_buffer}
       \quad && Q(t_k) \leq Q_{\max}, \; \quad \forall k, \\ 
       \label{eq:problem_decision_range}
      &\textrm{vars.,} & a_{k, d, m} \in \{0,1\}.
    \end{eqnarray}
\end{subequations}


Constraint \eqref{eq:problem_1bitrate} limits to \revised{select at most one bitrate for each tile of a chunk.} The second constraint \eqref{eq:problem_buffer} enforces the buffer capacity limit, where $Q(t_k)$ is the buffer level at time $t_k$ and \revised{ shows the aggregate length of tiles available in the buffer} at time $t_k$. $Q_{\max}$ is \revised{buffer capacity and depicts the maximum aggregate length of tiles stored in the buffer}. Since the number of tiles downloaded for each chunk is not fixed, the actual number of tiles that drain out from the buffer when a chunk is played can vary from chunk to chunk. To capture this, let $n_k$ be the average number of tiles downloaded for chunks played during the downloading of chunk $k$. The evolution of the buffer level is characterized as \revised{
\begin{equation}
\label{eq:buffer_update}
    Q(t_{k+1})  = \max  [Q(t_k) - n_k T_{k}, 0 ] + \sum_{d=1}^{D} \sum_{m = 1}^{M} a_{k, d, m}\delta,
\end{equation}
where the first term refers to the length of tiles removed from the buffer during the download time of chunk $k$ and the second term shows the length of tiles recently downloaded.}

\begin{remark}\label{rem:bandwidth2D}
    \TOMM{For regular 2D videos with $D = 1$, number of tiles that drain out of the buffer when each chunk is played fixed, $n_k = 1$. In this particular case, $\min  [Q(t_k), {T_{k}}]$ seconds drained out of the buffer after passing $T_k$ seconds.}
\end{remark}

\section{An Optimal Offline Algorithm}
\label{sec:offline}
\TOMM{In this section, we present an offline algorithm that obtains an upper bound for the optimal QoE of \tABR. The algorithm is listed as Algorithm~\ref{alg:opt_off} and is based on dynamic programming to optimally solve the \tABR problem, given the full knowledge of the bandwidth capacities in advance. While this algorithm is impractical since it requires the entire input a priori, we use this algorithm to evaluate the significance of the performance of the proposed online algorithm. 
It is worth noting that Algorithm~\ref{alg:opt_off} generates an upper bound for the offline solution of the \tABR problem formulated in Equation~\eqref{eq:problem_form}. However, it is not a fully offline algorithm since it still takes head position probability values as the input to \tABR.
Nevertheless, the performance of Algorithm~\ref{alg:opt_off} is an upper bound on the performance of any online algorithm without the knowledge of future bandwidth and head movement of the user. }

\TOMM{Now, we proceed to explain the details of Algorithm~\ref{alg:opt_off}. First, we discretize the time into slots with length $t_0$. Let $r(k, t, b)$ denote the maximum possible QoE the algorithm can achieve when it downloads the first $k$ chunks and finishes it at time $t$ and buffer level $b$. The algorithm initiates the value of $r(0, 0, 0) = 0$. Then, the algorithm calculates the download time $T$, and rebuffering time $R$, of any possible action that determines the selected bitrate during downloading $k^{th}$ chunk and evaluates the best possible performance by utilizing the values calculated for the first $({k-1})^{th}$ chunks. Let $T$ show how long the downloading of action $\textbf{a}$, the set of selected bitrates for each tile, would take. The algorithm may wait if the buffer is full to place the recently downloaded segments. Let $b'$ be the state of the buffer before downloading $k^{th}$ chunk, then the waiting time is at most $ T_0= \max [0, b' + n_k \delta - Q_{max}]$, where $n_k$ is the number of tiles with positive bitrate for $k^{th}$ chunk. Then, $T' = \lfloor T_0/t_0 \rfloor \times t_0$ is the conversion of $T_0$ to units of $t_0$. We round down the value of $T_0/t_0$ to ensure that the final calculated QoE is an upper bound for the QoE of the optimal offline solution. Also, we calculate the rebuffering happened while downloading $k^{th}$ chunk, $R = \max [T' - b', 0]$. Finally, we calculate the impact of action $\textbf{a}$ on achieved QoE by calculating $r'$ in line 16. Now, by knowing the download time and rebuffering for action $\textbf{a}$ we can update values of time, buffer, and utility achieved by the algorithm by the end of downloading $k^{th}$ chunk as demonstrated in Lines 11, 12, and 17 of Algorithm~\ref{alg:opt_off}. The following theorem states the optimality of Algorithm~\ref{alg:opt_off}. }

 \begin{figure}[t]
     \begin{algorithm}[H]
\SetAlgoLined
\KwResult{$\max_{(t,b)} r(K, t, b) / (t+b)$}
 $r(k,t,b) = -\infty$\;
 $r(0,0,0) = 0$\;
 \For{$k$ in $[1, 2, ...,K]$}{
 \For {all $(t', b')$ such that $r(k-1,t',b') > -\infty$}{
 \For {all possible set of action $\textbf{a} = [m^{(1)},m^{(2)},...,m^{(D)}]$}{
 $n:$ number of positive bitrates in $\textbf{a}$\;
 $T:$ download-time($\textbf{a}$)\;
 $T_0 = \max[ T, b'+ n \delta - Q_{max}]$\; 
 $T'$ = $\lfloor T_0/t_0 \rfloor \times t_0$\;
 $R = \max[T' - b', 0]$\; 
 $t = t' + T'$\;
 $b = b' - T' + R + n \delta$\;
 $\forall d, \ v^{(d)} := $ utility value of tile $d$\;
 $\forall d, \ a_{(d)} := 1$ if $m^{(d)} > 0$ else 0\;
 $r_k = \sum_{d=1}^{D} a_{(d)}(v^{(d)} p_{k, d} + \gamma \delta )$\;
 $r' = r(k-1, t', b') + r_k $\;
 $r(k, t, b) =  \max[r(k, t, b), r']$\;

 }
 }
 }
 \caption{Optimal offline algorithm for \tABR}
 \label{alg:opt_off}
\end{algorithm}
\vspace{-5pt}
 \end{figure}

\begin{theorem}
\label{thm:off_opt}
Algorithm \ref{alg:opt_off} gives an upper bound for QoE of optimal offline solution for \tABR.
\end{theorem}

\TOMM{{\em Proof.} The dynamic programming solution for the optimal offline problem with $K$ chunks evaluates the QoE that can be achieved by any permutation of actions during downloading the first ${K-1}$ chunks. Then, it uses the QoE values to evaluate the best achievable QoE for downloading all $K$ chunks. We prove the correctness of the algorithm by induction. The base case is evaluating the performance of the optimal offline solution for starting moment of the stream. $r(0,0,0) = 0$ is the base case which shows that the achieved QoE is zero at starting moment of the stream when we have downloaded no tiles and the buffer is empty. This proves the correctness of the base case. Since we are evaluating the performance of the offline algorithm for any value of $t$ and $b$ for $r(K-1, t, b)$,  we would be able to calculate the performance of the offline algorithm for all $K$ chunks by evaluating every possible action. It gives the optimal offline solution a chance to try any possible set of actions (line 5 of Algorithm \ref{alg:opt_off} ) and finds the best sequence of actions that leads to maximum QoE. As a result, $r(K, t, b)$ can store the maximum achieved QoE by downloading $K$ chunks until time $t$ and using $b$ seconds buffer. As a result, after filling the dynamic programming table, the value of $r(K, t, b) / (t+b)$ shows the upper bound on QoE of the optimal offline algorithm, and the proof ends.}

\section{\algName: An Online \threesixty ABR Algorithm }
\label{sec:online}

In this section, we propose \algName, a Lyapunov-based algorithm that finds a near-optimal solution to \tABR. \algName is an online algorithm whose decisions do not require the knowledge of future bandwidth values.

\subsection{Design and Analysis of \algName}
The design of \algName is based on three key ideas. First, \algName finds a solution for a single-slot maximization problem that leads to a near-optimal solution for the original long-term problem over $K$ chunks. Note that, solving the long-term optimization problem is not possible for the online algorithm since there is uncertainty about the future input. Second, the single-slot decision of \algName is based on the buffer level; the higher the current buffer level, the higher the selected bitrate for download. This is intuitive since a high buffer level indicates that the input rate into the buffer was higher than the output rate from the buffer, so the algorithm has more freedom to download high-quality tiles. Third, \algName uses a threshold as the indicator of high buffer utilization, and by reaching the threshold, it moves to an idle state and waits until the buffer level decreases again. This approach limits the buffer utilization of \algName. It is worth noting that at the beginning and with an empty buffer, \algName starts downloading low bitrates.
With the above three key ideas, we now explain the technical details of \algName. The pseudocode for action taken by \algName for chunk $k$ is described in Algorithm \ref{alg:alg_pseu}.

\algName uses an input parameter $V$ that controls the trade-off between the performance of the algorithm and the maximum acceptable buffer utilization of the algorithm. Note that parameter $V$ also plays a critical role in the playback delay, i.e., for real-time streaming, smaller values of $V$ are preferable, while in an on-demand streaming application, the larger values of $V$ are acceptable. At the decision time $t_k$ for chunk $k$, the buffer level $Q(t_k)$ and head position probability values encoded in $p_{k, d}$ are given. \algName selects the bitrates for tiles of chunk $k$ by solving the maximization problem described in the following. 

\begin{subequations}
\label{eq:alg_act}
    \begin{align}
    \label{eq:alg_maximization}
    \text{arg}\max_{a(k)}&\quad  \eta(k, a(k)) = \sum_{d=1}^D \sum_{m=1}^M \frac{  a_{k, d, m} \big(V (v_m \ . \ p_{k, d}  + \gamma \delta) - Q(t_k)/\delta\ \big) }{ S_m}\\
       \label{eq:alg_single_bitrate}
      &\quad \textrm{s.t.,}\quad \quad  \sum_{m = 1}^{M} a_{k, d, m} \leq 1,\quad \forall k, d,\\ 
      &\quad \textrm{vars.,}\quad \  a_{k, d, m} \in \{0,1\},
    \end{align}
\end{subequations}

\TOMM{where 
$$a(k) :=\{a_{k,d,m} | \forall k, m \},$$}
is a decision vector of \algName and
$$ 0 < V \leq \frac{Q_{\max}/\delta -D}{v_M+\gamma \delta},$$
is a control parameter bounded by the R.H.S term to guarantee that the required buffer level for \algName is less than $Q_{\max}$.
\revised{Constraint \eqref{eq:alg_single_bitrate} limits \algName to select at most one bitrate for each tile.} \algName selects the near-optimal bitrates of chunk $k$ by finding a decision vector $\textbf{a}(k) = [a_{k, 1, 1}, a_{k, 1, 2}, ..., a_{k, 1, M}, a_{k, 2, 1}, ..., a_{k, D, M}]$ 
that maximizes the value of $\eta(k, a(k))$ in Equation~\eqref{eq:alg_maximization}. When the buffer level exceeds $V\delta (v_M  + \gamma \delta)$, the algorithm enters the idle state and downloads nothing. In this situation, \algName waits for $\Delta$ seconds and repeats the bitrate selection for that chunk again. The selection of $\Delta$ could be dynamic as suggested in~\cite{spiteri2020bola}, the algorithm waits until the buffer level reaches $Q(t_0) \leq V\delta(v_M + \gamma \delta )$. We note that our theoretical analysis is valid even with a dynamic waiting time. 

\begin{algorithm}[!h]
\SetAlgoLined
\TOMM{$\textbf{a}(k)$: A decision vector that maximizes the value of $\eta(k, a(k))$ defined in \eqref{eq:alg_maximization} with respect to single-bitrate constraint \eqref{eq:alg_single_bitrate} for chunk $k$\;}

 \If {number of non-zero elements in $\textbf{a}(k) >$ 0}
 {
 Download bitrates according to $\textbf{a}(k)$ and finish the decision making of chunk $k$\;
 }
 \Else{
 Wait for $\Delta$ seconds and repeat the bitrate selection for this chunk again\;
 }
 \caption{\algName($k$)}
 \label{alg:alg_pseu}
\end{algorithm}


\subsection{Theoretical Analysis of \algName}
We first provide an upper bound for the buffer level of \algName in Theorem~\ref{thm:buffer_size}. Second, in Theorem~\ref{thm:alg_perf}, we show the QoE of \algName is within a constant term of the optimal QoE of \tABR. The theoretical results reveal an interesting trade-off between the QoE and the playback delay of the \algName, which is discussed in Remark~\ref{rem:conflict}.

\begin{theorem}\label{thm:buffer_size}
Under bitrate control of \algName, the buffer level never exceeds $V\delta (v_M + \gamma \delta ) + D\delta$.
\end{theorem}

{\em Proof.} The proof of this theorem is inspired by the proof of Theorem~1 in \cite{spiteri2020bola}. However, for \algName, one has to deal with another challenge originated by adding head position probabilities into the control plane of \algName. The high-level idea is \algName select bitrates if the buffer level is at most $V\delta( v_M + \gamma \delta)$, otherwise it enters the idle states. Therefore, the maximum possible value of buffer level after download of new tiles would be $V\delta( v_M + \gamma \delta) + D\delta$. \TOMM{We prove this theorem by induction. The base case is $Q(t_1) = 0 \leq V \delta (v_M + \gamma \delta ) + D\delta $ that satisfies the statement of the theorem. Two cases are possible for value of $Q(t_k)$:
\begin{itemize}
    \item \textit{Case 1}: $Q(t_k) \leq V \delta (v_M + \gamma \delta )$: in this case, the buffer level at time $t_{k+1}$ will not exceed $Q(t_k) + D \delta \leq V \delta (v_M + \gamma \delta ) + D\delta .$
    \item \textit{Case 2}: $V \delta (v_M + \gamma \delta ) < Q(t_k) \leq V \delta (v_M + \gamma \delta  ) + D\delta $: In this scenario, the action at time $t_k$ is to wait and refrain from downloading any tiles, as downloading any bitrate $m$ would introduce a negative term into the value of $\eta(k, \textbf{a}(k))$ in Equation~\eqref{eq:alg_maximization}. Thus, $Q(t_{k+1}) \leq Q(t_k)$.
\end{itemize}}

Now, we proceed to analyze the QoE of \algName. With large $K$, the \tABR problem with rate stability constraint \cite{neely2010stochastic} is equivalent to the relaxed version of \tABR with limited buffer capacity, i.e.,

\begin{subequations}
\label{eq:rate-stability}
\begin{align*}
      Q(t_k) \leq &  Q_{\max}\\ 
      \Rightarrow &\lim_{K \to \infty} \frac{1}{K} \Exp \bigg\{\sum_{k = 1}^{K} \sum_{d=1}^{D} \sum_{m=1}^{M} a_{k, d, m} \delta \bigg\} \leq \lim_{K \to \infty} \frac{1}{K} \Exp \bigg\{\sum_{k = 1}^{K} n_k T_k \bigg\}.
\end{align*}
\end{subequations}
In addressing the \tABR problem with limited buffer capacity, it's essential to ensure that the expected input rate into the buffer remains below the buffer's output rate. Failing to do so could lead to a buffer capacity breach, especially as $K$ approaches infinity. Notably, solutions that accommodate limited buffer capacity inherently satisfy the rate stability constraint, though the reverse may not always hold. \revised{In addition, in the limited buffer capacity setting, the difference between $\Exp\{T_{\text{end}}\}$ and $\Exp\{\sum_{k=1}^K T_k\}$ is bounded by a finite value of $Q_{\max}$. Consequently, for the large videos, Equations~\eqref{eq:UK} and~\eqref{eq:RK} allow the substitution of $\Exp\{T_{\text{end}}\}$ with $\Exp\{\sum_{k=1}^K T_k\}$.}


\textbf{\textit{The stationary algorithm}.} In the context of \tABR problem, we define \textit{stationary algorithm} as an ABR algorithm that uses a fixed set of bitrates, $\textbf{A}^{*}$, with size $D$ ($|\textbf{A}^{*} | = D$), and for each chunk $k$, the set of selected bitrates for all $D$ tiles are the same as the $\textbf{A}^{*}$. Note that the selected bitrate for each tile may vary over time depending on the head position probability values, while the set of bitrates selected for all tiles of the chunk remains fixed.

Offline \tABR problem fits in the notation of optimization for renewal frames \cite{neely2012dynamic}. Precisely, by setting renewal frame duration the same as chunk download times and letting the achieved QoE of downloading each chunk represent penalty values in the notation of \cite{neely2012dynamic}, the offline \tABR problem can convert into an optimization problem over renewal frames. Then, following Lemma 1 in \cite{neely2012dynamic}, we prove the existence of a stationary algorithm with optimal QoE of $U_K^{*} + \gamma R_K^{*}$.

\begin{lemma}
\label{lem:stat_opt}
For the \tABR with a large video, i.e., $K~\to~\infty$, there exists a stationary algorithm that satisfies the rate stability constraint and achieves the optimal expected QoE of $U_K^{*} + \gamma R_K^{*}$.
\end{lemma}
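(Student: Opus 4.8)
The plan is to follow the stochastic-network-optimization route already signposted above: recast the offline \tABR problem as a constrained optimization over renewal frames in the sense of \cite{neely2012dynamic}, and then invoke the structural existence result for optimal stationary policies. First I would identify the renewal frame with the download interval of a single chunk, so that frame $k$ has (random) length $T_k$ and the decision made at its start is the bitrate vector $\textbf{a}(k)$ subject to the single-bitrate constraint \eqref{eq:alg_single_bitrate}. On this frame the accrued reward is the numerator of the QoE, namely $\sum_{d=1}^{D}\sum_{m=1}^{M} a_{k,d,m}\big(p_{k,d} v_m + \gamma\delta\big)$, while the quantity that must be controlled is the buffer input $\sum_{d,m} a_{k,d,m}$ relative to the output $n_k T_k/\delta$. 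With this dictionary, the objective $U_K + \gamma R_K$ is exactly a ratio of an expected per-frame reward to an expected frame length, and the rate-stability requirement in \eqref{eq:rate-stability} is exactly a time-average inequality constraint of the type handled by the renewal-frame framework.

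Second, I would verify the hypotheses needed to apply Lemma 1 of \cite{neely2012dynamic}: the exogenous randomness driving each frame --- the head-position probabilities $\{p_{k,d}\}_d$ and the bandwidth realization that determines $T_k$ --- is modeled as an i.i.d.\ (or stationary ergodic) sequence of observable ``events,'' and both the per-frame reward and the frame length are uniformly bounded (the former because $v_m \le v_M$ and $\sum_{d,m} a_{k,d,m} \le D$, the latter because segment sizes and bandwidths lie in bounded ranges). Under these conditions Lemma 1 of \cite{neely2012dynamic} guarantees that the optimal value of the ratio objective, taken over all causal policies, is attained by a stationary randomized policy that selects $\textbf{a}(k)$ according to a fixed conditional distribution of the observed event on frame $k$, independent of the buffer history. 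Because the offline optimum $U_K^*+\gamma R_K^*$ is by definition the supremum over all feasible policies, this stationary policy achieves exactly $U_K^*+\gamma R_K^*$ while meeting the rate-stability constraint.

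Finally, I would refine this stationary randomized policy into the restricted ``fixed-multiset'' form used in the statement, where a single multiset $\textbf{A}^*$ of $D$ bitrate levels is reused on every chunk and only the tile-to-bitrate assignment varies with $\{p_{k,d}\}_d$. The key observation is a decoupling: the frame length $T_k$ depends on the chosen bitrates only through the total downloaded size $\sum_{d,m} a_{k,d,m} S_m$, which is a function of the multiset alone, whereas for any fixed multiset the reward $\sum_{d,m} a_{k,d,m} p_{k,d} v_m$ is maximized by the assignment that matches larger bitrates to tiles with larger $p_{k,d}$. Hence one may replace a randomized choice of assignment by the deterministic greedy assignment of a fixed multiset without decreasing the reward or altering the frame-length statistics, collapsing the stationary randomized policy onto a stationary algorithm of the required form.

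I expect the main obstacle to be this last reduction rather than the invocation of \cite{neely2012dynamic}: one must argue carefully that the supremum over multisets is attained and that \emph{fixing} the multiset --- rather than randomizing over it across frames --- does not sacrifice optimality once the frame-length and reward functionals have been disentangled. Establishing the stationarity and boundedness preconditions for the renewal-frame theorem is comparatively routine, but it does require committing to a precise probabilistic model for how $\{p_{k,d}\}$ and the bandwidth evolve, which the informal problem statement leaves implicit.
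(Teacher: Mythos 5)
Your proposal follows essentially the same route as the paper: recasting the offline \tABR problem as an optimization over renewal frames (with a chunk's download interval as the frame) and invoking Lemma 1 of \cite{neely2012dynamic} to obtain an optimal stationary policy satisfying rate stability. If anything, your treatment is more careful than the paper's own two-sentence proof---in particular, your final reduction from a stationary \emph{randomized} policy to the fixed-multiset stationary form (via the rearrangement/greedy assignment of bitrates to tiles), which you correctly flag as the delicate step, is exactly the gap the paper papers over by deferring to ``the approach taken for proof of Lemma 1 in \cite{spiteri2020bola}.''
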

{\em Proof Sketch.} The proof of this lemma follows from Lemma 1 in \cite{neely2012dynamic} and continues with the approach taken for proof of Lemma 1 in \cite{spiteri2020bola}. Based on the definition of a stationary algorithm for the \tABR problem, the expected QoE of the stationary algorithm is the same as expected, achieving QoE on each slot, which satisfies the criteria of Lemma 1 in \cite{neely2012dynamic}.

\begin{theorem}[main theorem]
\label{thm:alg_perf}
Let  $\texttt{OBJ}$ be the  expected QoE achieved by \algName. For a large video, i.e., $K \to \infty$, 
\begin{equation}
    \label{eq:performance}
   \texttt{OBJ}^* - \frac{D \delta^2 + \Psi}{2V\delta^2}\sigma \leq \texttt{OBJ} ,
\end{equation}

where  $\texttt{OBJ}^*= U_K^* + \gamma R_K^*$ is expected QoE of the offline optimal algorithm, and $\sigma = 1/\Exp \{ T_{k}\}$ and $\Psi \leq \Exp\{ D T_k^2\}$. That is, \algName achieves a QoE that is within an additive factor of the offline optimal. 
\end{theorem}

{\em Proof.} Let's define the Lyapunov function $L(Q(t_k))$, and per-slot conditional Lyapunov drift $\Phi(t_k)$ as below
\begin{align*}
L(Q(t_k)) &= \frac{1}{2\delta^2} Q^2(t_k),\\
\Phi(t_k) &= \Exp \{ \Delta\ L(Q(t_k))\  |\ Q(t_k) \} = \Exp \{L(Q(t_{k+1})) - L(Q(t_k))  | Q(t_k) \}.
\end{align*}


\TOMM{
Consider two cases: 1) $Q(t_k) \leq n_k T_k$ and, 2) $Q(t_k) > n_k T_k$. The value of $\Phi(t_k)$ can be derived from the buffer level evolution described in Equation \eqref{eq:buffer_update}.In the first case, when $Q(t_k) \leq n_k T_k$, the buffer is emptied after downloading chunk $k$, and the value of $\Phi(t_k)$ is given by:}
\begin{align*}
    \Phi_1(t_k) = \Exp \{ \frac{1}{2} \big( \sum_{d=1}^D\sum_{m=1}^M a_{k, d, m} \big)^2  - \frac{1}{2\delta^2} Q^2(t_k) | Q(t_k)\},
\end{align*}
 \TOMM{In the second case, where $Q(t_k) > n_k T_k$, we have:}
\begin{align*}
    \Phi_2(t_k) = \Exp \big\{\frac{1}{2} \big(\sum_{d=1}^D\sum_{m=1}^M a_{k, d, m} - \frac{n_k T_k}{\delta} \big) ^2  -\frac{Q(t_k)}{\delta} \big( \frac{ n_k T_k}{\delta} - \sum_{d=1}^D\sum_{m=1}^M a_{k, d, m} \big)\big\} | Q(t_k) \bigg\}.
\end{align*}

\TOMM{Thus, the value of $\Phi(t_k)$ can be expressed as:}
\begin{subequations}
\begin{align*}
    \Phi(t_k) & \leq  \max  \{  \Phi_1(t_k),  \Phi_2(t_k) \}\\
    & \leq  \Exp \bigg\{ \frac{n^2_k T_k^2 + \delta^2 (\sum_{d=1}^D\sum_{m=1}^M a_{k, d, m})^2}{2\delta^2} | Q(t_k) \bigg\}-  Q(t_k) \Exp\bigg\{ \frac{n_k T_k}{\delta}  - \sum_{d=1}^D\sum_{m=1}^M a_{k, d, m} \} | Q(t_k) \bigg\}\\
    & \leq \Exp\bigg\{ \frac{D \delta^2 + n_k^2 T_k^2}{2\delta^2} | Q(t_k) \bigg\} - Q(t_k)\Exp\bigg\{\frac{n_k T_k }{\delta} -\sum_{d=1}^D\sum_{m=1}^M a_{k, d, m} \} | Q(t_k) \bigg\}\\
    & =  \frac{D \delta^2 + \Psi}{2\delta^2} - Q(t_k)\Exp\bigg\{\frac{n_k T_k}{\delta} -\sum_{d=1}^D\sum_{m=1}^M a_{k, d, m} \} | Q(t_k) \bigg\}.
\end{align*}
\end{subequations}

\TOMM{By subtracting a fixed term from both sides we get:}
\begin{subequations}
\begin{align*}
    \Rightarrow \Phi(t_k) - & V\Exp\bigg\{ \sum_{d=1}^D  \sum_{m=1}^M a_{k, d, m}(p_{k,d} v_m+\gamma \delta)| Q(t_k) \bigg\} &\\
    \leq &   \frac{D \delta^2 + \Psi}{2\delta^2} - \frac{Q(t_k)}{\delta} \Exp\bigg\{ \frac{n_k T_k}{\delta} - \sum_{d=1}^D\sum_{m=1}^M a_{k, d, m}| Q(t_k) \bigg\}\\
    - & V\Exp\bigg\{ \sum_{d=1}^D  \sum_{m=1}^M a_{k, d, m}(p_{k,d} v_m+\gamma \delta)| Q(t_k) \bigg\}\\
    \leq &   \frac{D \delta^2 + \Psi}{2\delta^2} - \frac{Q(t_k)}{\delta} \Exp\bigg\{ \frac{n_k T_k}{\delta} - \sum_{d=1}^D\sum_{m=1}^M a_{k, d, m}| Q(t_k) \bigg\}\\
    -& V(U_K^* + \gamma R_K^*) \Exp \{ T_{k} | Q(t_k)\}.
\end{align*}
\end{subequations}

\TOMM{The previous equation holds since the decision of \algName at time $t_k$ is a solution of the maximization equation detailed in Equation \eqref{eq:alg_act} and}
$$\Exp \big\{ \sum_{d=1}^D \sum_{m=1}^M a_{k, d, m}^{*}(p_{k,d} v_m+\gamma \delta)|Q(t_k)\big\} \leq \Exp \big\{ \sum_{d=1}^D \sum_{m=1}^M a_{k, d, m}(p_{k,d} v_m+\gamma \delta)|Q(t_k)\big\}.$$

Then, we have
\begin{subequations}
\begin{align*}
     \Phi(t_k) - & V\Exp\bigg\{ \sum_{d=1}^D  \sum_{m=1}^M a_{k, d, m}(p_{k,d} v_m+\gamma \delta)| Q(t_k) \bigg\} \\
    \leq & \frac{D \delta^2 + \Psi}{2\delta^2} - \frac{Q(t_k)}{\delta} \bigg( \frac{\Exp{\{n_k\}}}{\delta} - \frac{\Exp \{ \sum_{d=1}^D\sum_{m=1}^M a^{*}_{k, d, m} \}}{\Exp\{T^{*}_k\}} \bigg)\Exp\{T_k \}\\
    - & V(U_K^* + \gamma R_K^*) \Exp \{ T_{k}\},
\end{align*}
\end{subequations}
where $a^{*}_{k, d, m}$ is the action of stationary algorithm for tile $d$ and bitrate index $m$ of chunk $k$ which satisfies the rate stability constraint. $T^{*}_k$ shows the length of download time for chunk $k$ while the stationary algorithm is taking action. Based on rate stability constraint, the second term in the equation above is always negative,
\begin{subequations}
\begin{align*}
    \sum_{k=1}^K \Phi(t_k) & - \sum_{k=1}^K V\Exp\bigg\{ \sum_{d=1}^D  \sum_{m=1}^M a_{k, d, m}(p_{k,d} v_m+\gamma \delta )| Q(t_k) \bigg\}\\
    \leq & \frac{D \delta^2 + \Psi}{2\delta^2}K  -  V(U_K^* + \gamma R_K^*) \Exp \{ T_{k}\} K.
\end{align*}
\end{subequations}

By dividing all terms by $V \cdot K \cdot  \Exp\{T_{k} \}$ and taking the limit $K \to +\infty$, the proof is completed, noting that the total download time is at most $Q_{\max}$ seconds shorter than $T_{end}$.


\begin{remark}[On the conflict between the playback delay and QoE of streaming]
\label{rem:conflict}
Theorem~\ref{thm:alg_perf} states as the value of $V$ increases, the performance of \algName gets closer to the optimal QoE. However, Theorem \ref{thm:buffer_size} reveals that the upper bound on the playback delay increases with higher values of $V$. Comparing these results, we observe a trade-off between minimizing playback delay and maximizing QoE in \algName. As the playback delay increases, the QoE performance of \algName approaches the offline optimum.

\end{remark}





\subsection{Understanding the  Behavior of \algName}
\label{sec:ex_simple}
We demonstrate the functionality of \algName through a straightforward test. Our test utilizes a 250-second video, segmented into 5-second chunks. Each chunk further divides into six tiles, each encoded at distinct bitrates: \revised{2Mbps, 4Mbps, 6Mbps, 8Mbps, 10Mbps, and 15Mbps}. To represent utility values, we employed a logarithmic function $v_m = \log(2S_m / S_1)$, similar to previous works such as \cite{reichl2013logarithmic, spiteri2020bola, hu2019optimization}. While our theoretical results only require a non-decreasing utility function, we opted for a concave function that better reflects real-world utility functions. The concave utility function exhibits a diminishing return property, meaning that increasing the bitrate from 1 Mbps to 2 Mbps provides more utility than increasing it from 10 Mbps to 11 Mbps, even though the bitrate difference is the same in both cases. In Table~\ref{tbl:ex1_bitrates}, we report the utility values generated from logarithmic function, size of tiles, and available bitrates. For this simple test, we set $\gamma = 0.1$ and $V = 5.5$. \revised{We note that $U_K$ assesses the expected utility across various tiles within a chunk, while $R_K$ quantifies the aggregate length of downloaded tiles. Setting $\gamma = 1/D$ equalizes the significance of utility and smoothness concerning a single tile.}

\begin{table*}[!t]
	\caption{\TOMM{Available bitrates and utility values used in Section~\ref{sec:ex_simple}}}
	\label{tbl:ex1_bitrates}
	\begin{center}
		\begin{tabular}[P]{|L{3cm}|c|c|c|c|c|c|}
			\hline
			\textbf{Bitrate (Mbps)} & 2 & 4 & 6 & 8 & 10 & 15 \\
			\hline
			\textbf{Sizes (Mb)} & 10 & 20 & 30 & 40 & 50 & 75 \\
			\hline
			\textbf{Utility values} & 1.000 & 2.000 & 2.585 & 3.000 & 3.322 & 3.907 \\
			\hline

		\end{tabular}
	\end{center}
\end{table*}


\begin{figure*}
	\centering
    \subfigure{\includegraphics[width=0.4\textwidth]{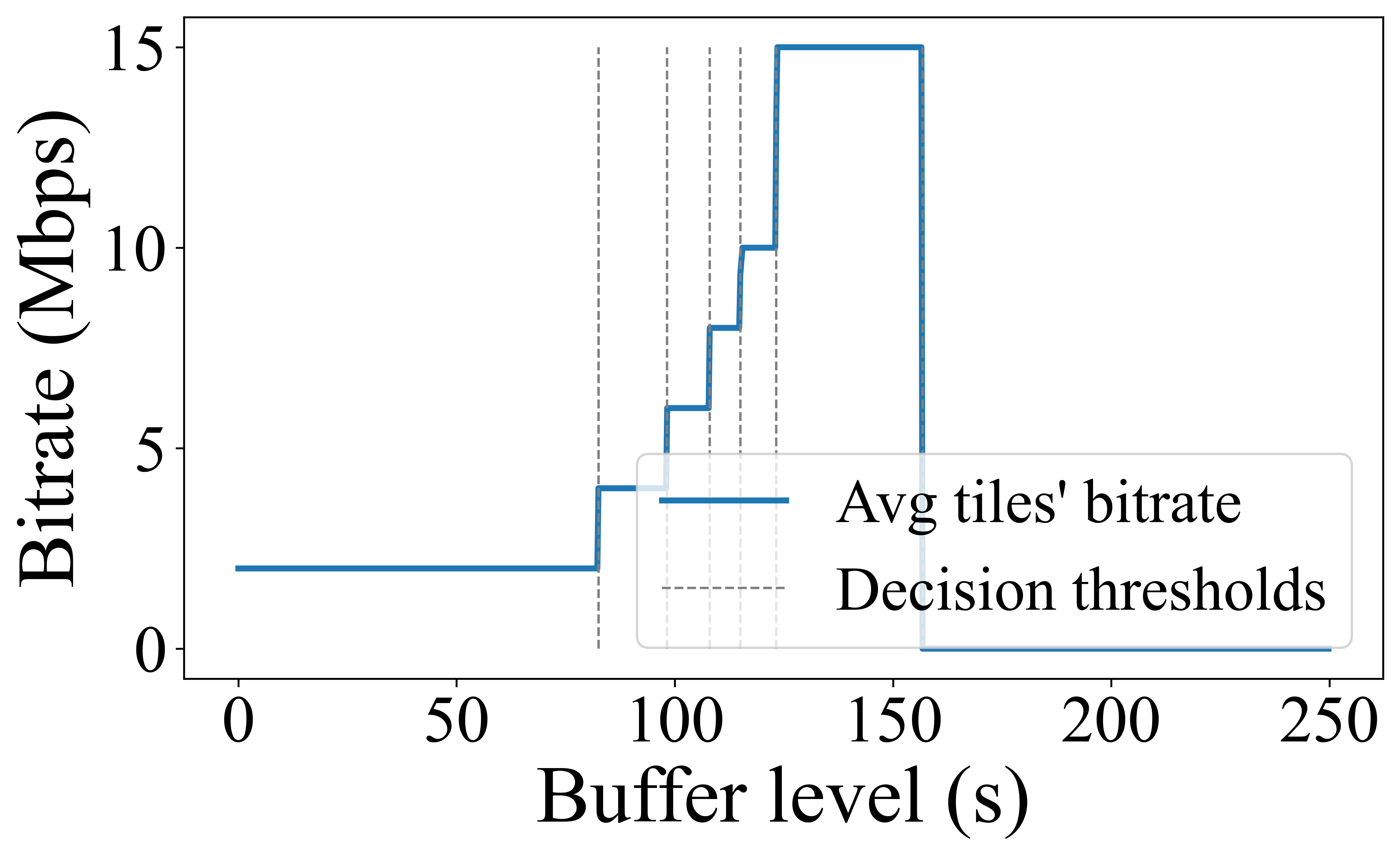}}\hspace{8mm}
	\subfigure{\includegraphics[width=0.4\textwidth]{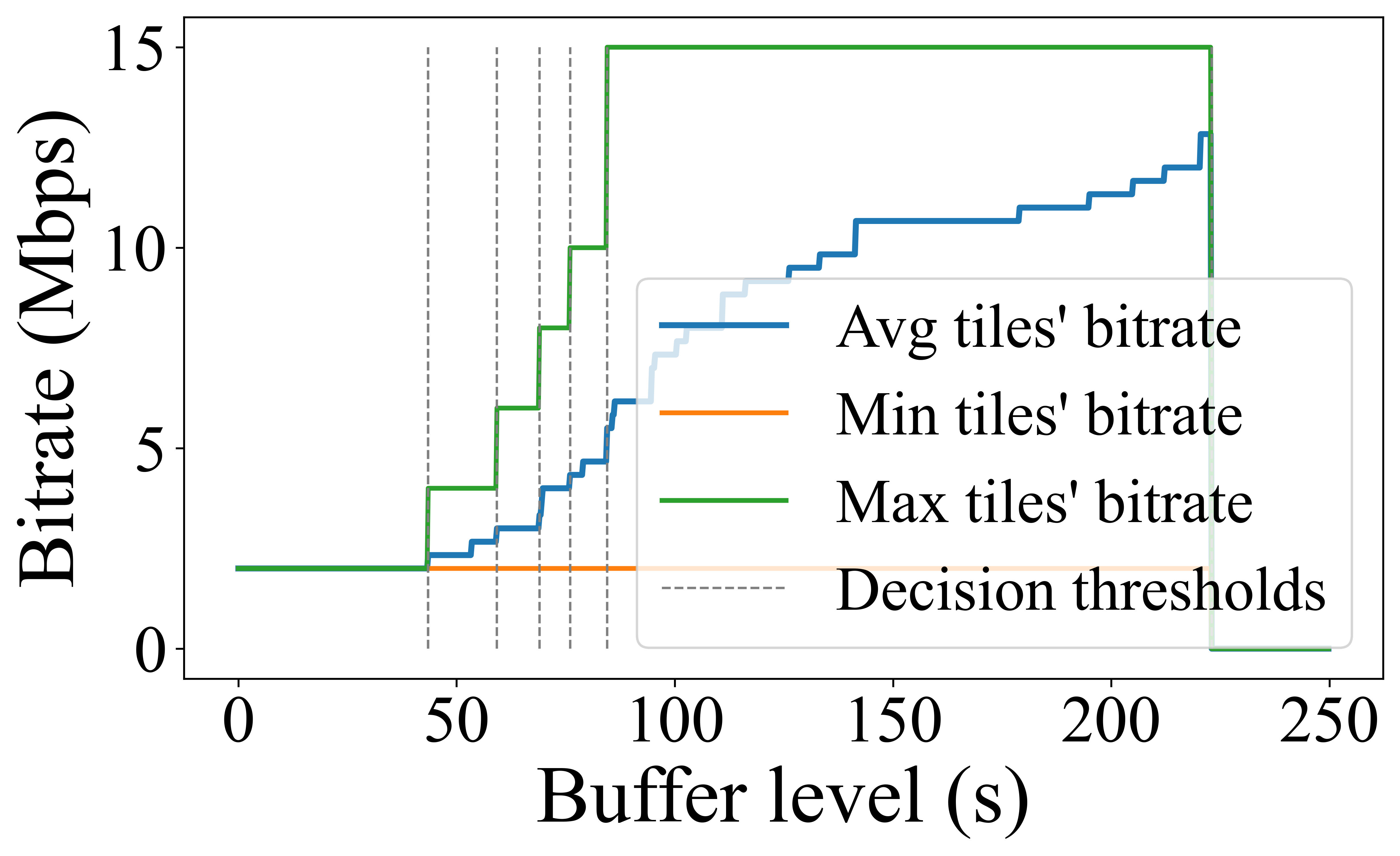}}
	\caption{The selected bitrate of \algName for tiles with highest and lowest probability and average selected bitrate as a function of buffer level for homogeneous (left) and heterogeneous (right) distributions.}
	\label{fig:mic_rates2}
\end{figure*}

\begin{figure*}
	\centering
	\subfigure{\includegraphics[width=0.4\textwidth]{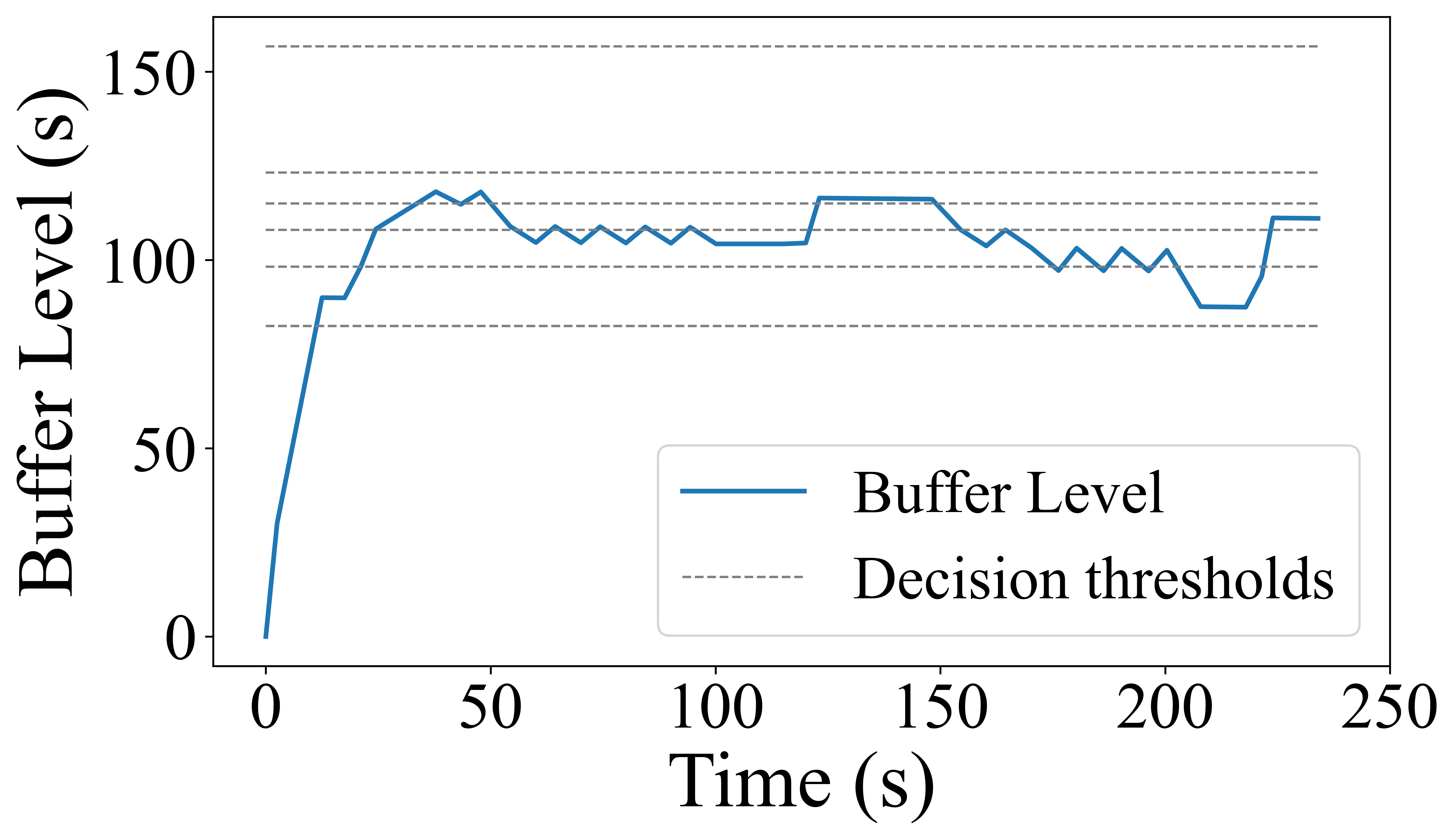}}\hspace{8mm}
	\subfigure{\includegraphics[width=0.4\textwidth]{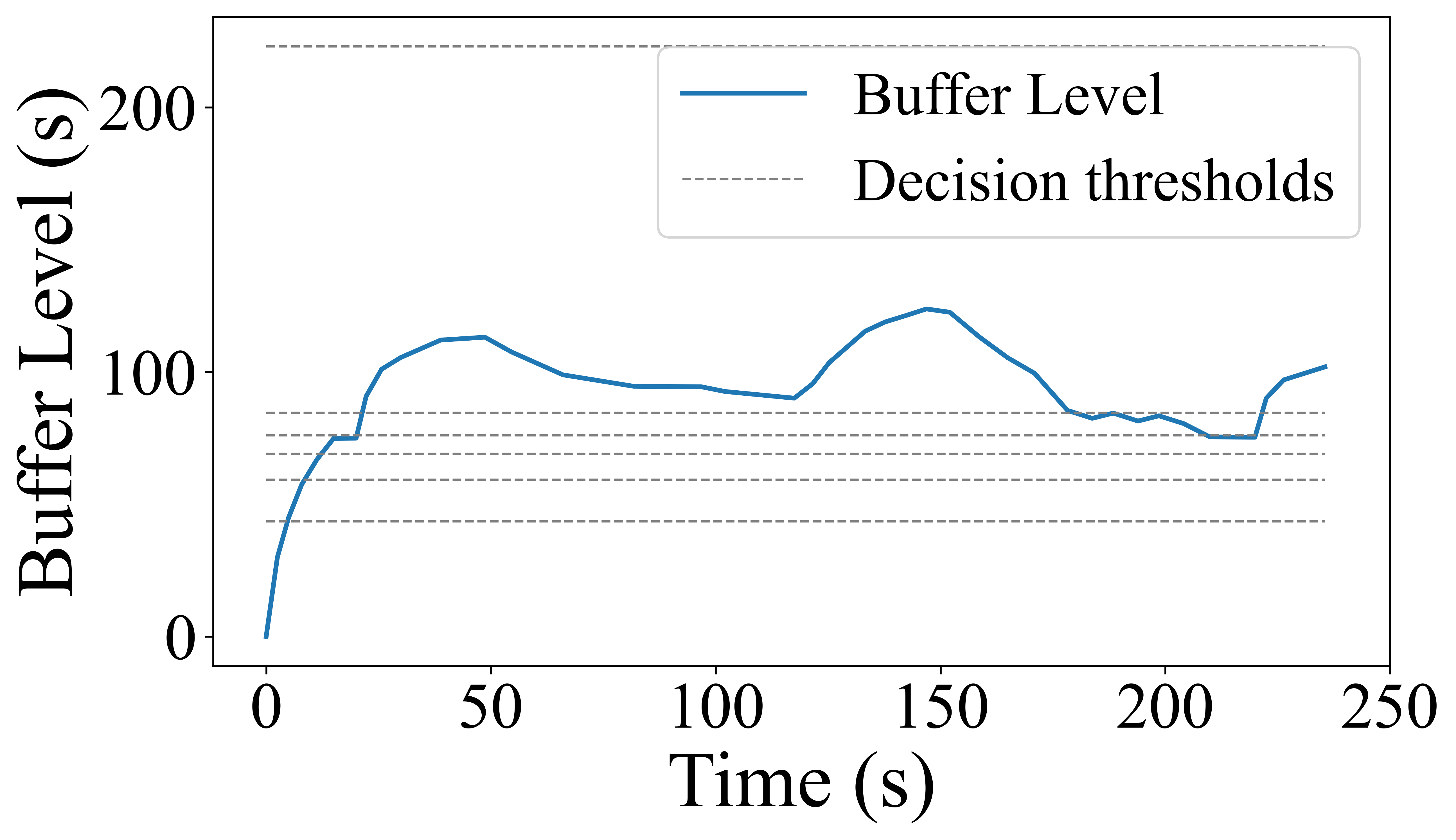}}
	\caption{Buffer level variation over time under bitrate selection of \algName for homogeneous (left) and heterogeneous (right) distributions.}
	\label{fig:mic_rates}
\end{figure*}

The head position of the user 
is represented by a 
probability distribution that is critical for guiding the actions of \algName. For this test, we evaluate the performance of \algName using two different head position probability distributions. The first distribution is homogeneous, where each tile is assigned a uniform probability, resulting in an equal likelihood of the user watching any tile ($p_{k, d} = 1 / D$ for all tiles). The second distribution is heterogeneous, with a linear increase in probability from the minimum to the maximum. Specifically, we set the maximum and minimum probabilities as $0.317$ and $0.017$, respectively. \TOMM{The values of $p_{k, d}$ for these two head position probability distributions are listed in Table \ref{tbl:ex1_probs}.} \revised{Note that these values are chosen arbitrarily to elucidate \algName's behavior clearly.}

\begin{table*}[!t]
	\caption{\TOMM{Two probability distributions used in Section~\ref{sec:ex_simple}} }
	\label{tbl:ex1_probs}
	\begin{center}
		\begin{tabular}[P]{|c|c|c|c|c|c|c|}
			\hline
			\textbf{Distribution} & \multicolumn{6}{c|}{Probabilities of head direction in an descending order}\\
			\hline
			Homogeneous & 0.166 & 0.166 & 0.166 & 0.166 & 0.166 & 0.166 \\
			\hline
			Heterogeneous & 0.317 & 0.257 & 0.197 & 0.136 & 0.076 & 0.017 \\
			\hline
		\end{tabular}
	\end{center}
\end{table*}

Figure~\ref{fig:mic_rates2} shows the maximum, minimum, and average bitrates of downloaded tiles for each chunk of the video. For the homogeneous distribution, the selected bitrate for all tiles of a chunk is the same.
The results in Figure~\ref{fig:mic_rates} show that the average download bitrate grows with an increase in buffer level. We show the threshold values for the buffer level where the action for the tile with the highest probability changes.  In addition, we show the variations of buffer level over time for both homogeneous and heterogeneous head position probability distributions in Figure~\ref{fig:mic_rates}. When the buffer level is higher than $V \delta (v_M \cdot p_{k, d}  + \gamma \delta)$, \algName downloads nothing for that tile. Note that increasing the value of $\gamma$ increases the importance of continuous playback. Increasing the value of $\gamma$ by $\epsilon$ is similar to reducing the buffer level by $\epsilon \delta^2 V$, resulting in \algName  using correspondingly higher threshold values for the buffer levels for bitrate switches. Therefore, increasing the value of $\gamma$ shifts the bitrate curves in Figure~\ref{fig:mic_rates2} to the right and vice versa. Lastly, Figure~\ref{fig:undr_bitrate} and Figure~\ref{fig:undr_bitrate2} show the average bitrates of tiles downloaded across the time and the bitrate of the tiles that user actually sees (playing bitrate) in their \AF. One can see that \algName responds to the bandwidth change by increasing/decreasing selected bitrates.

\begin{figure*}
	\centering
    \subfigure{\includegraphics[width=0.36\textwidth]{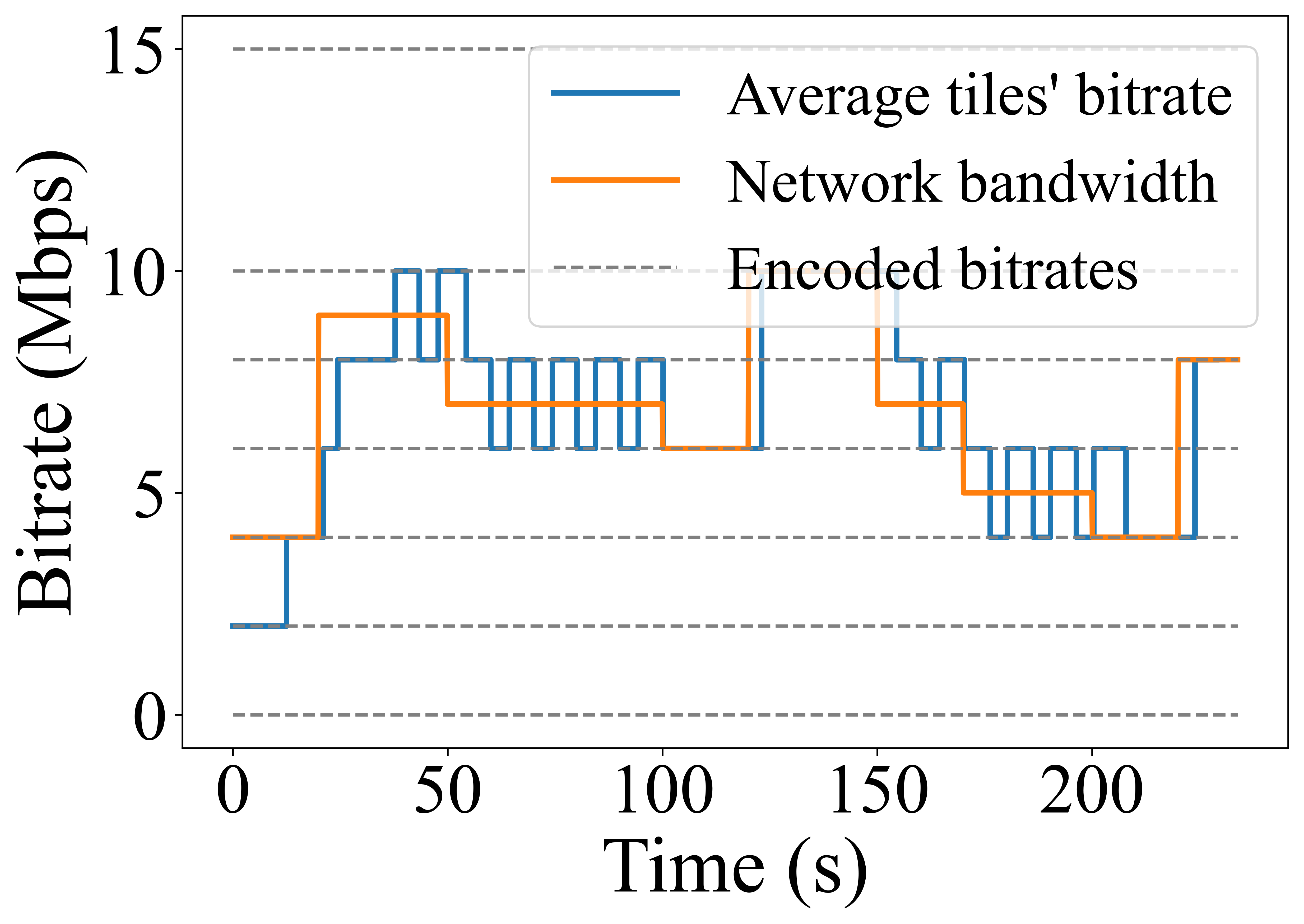}}\hspace{8mm}
	\subfigure{\includegraphics[width=0.36\textwidth]{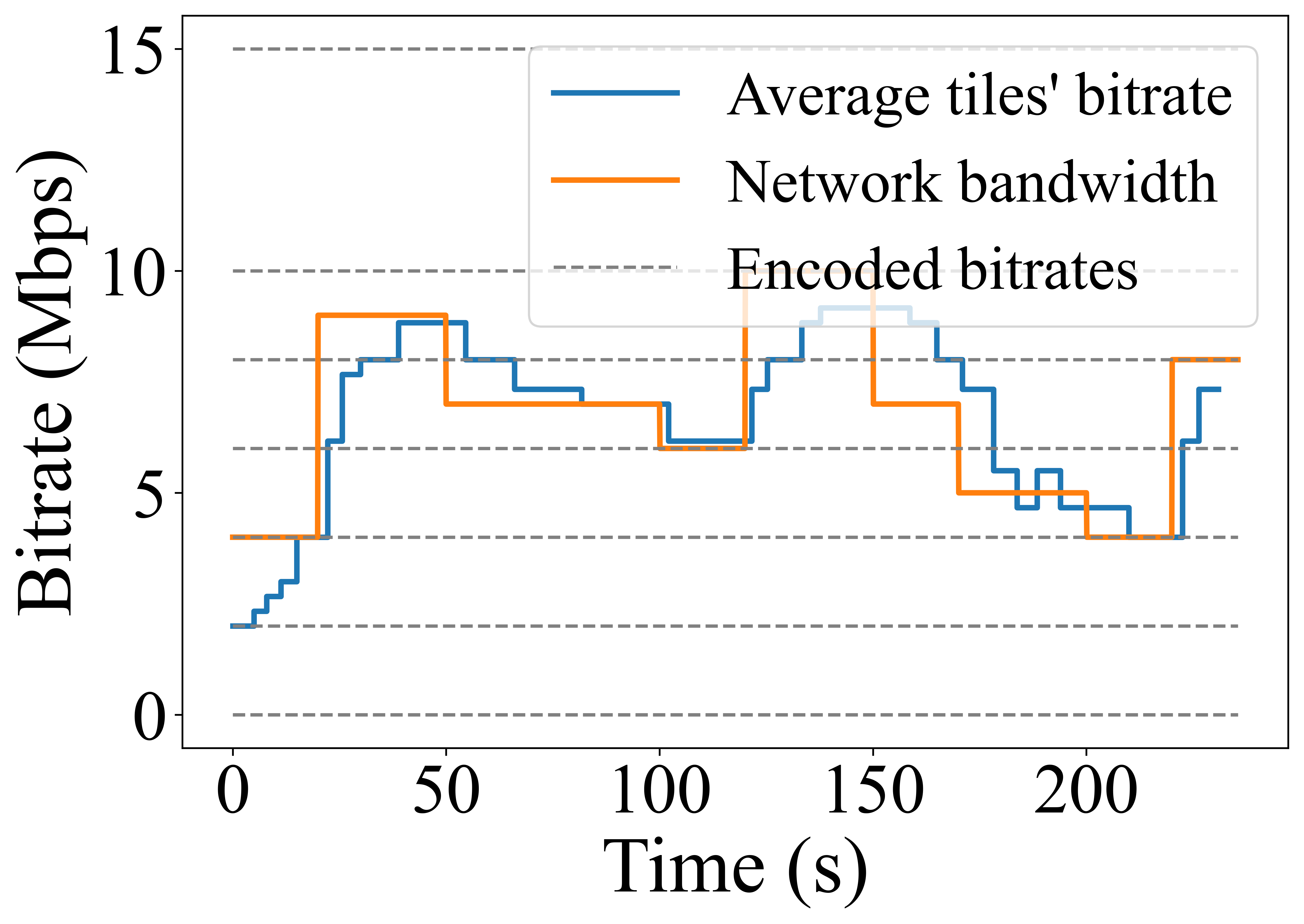}}
	\caption{Variation of average downloaded bitrates over time under bitrate selection of \algName for the homogeneous (left), and heterogeneous (right) head position probability distribution. }
	\label{fig:undr_bitrate}
\end{figure*}

\begin{figure*}
	\centering
	\subfigure{\includegraphics[width=0.36\textwidth]{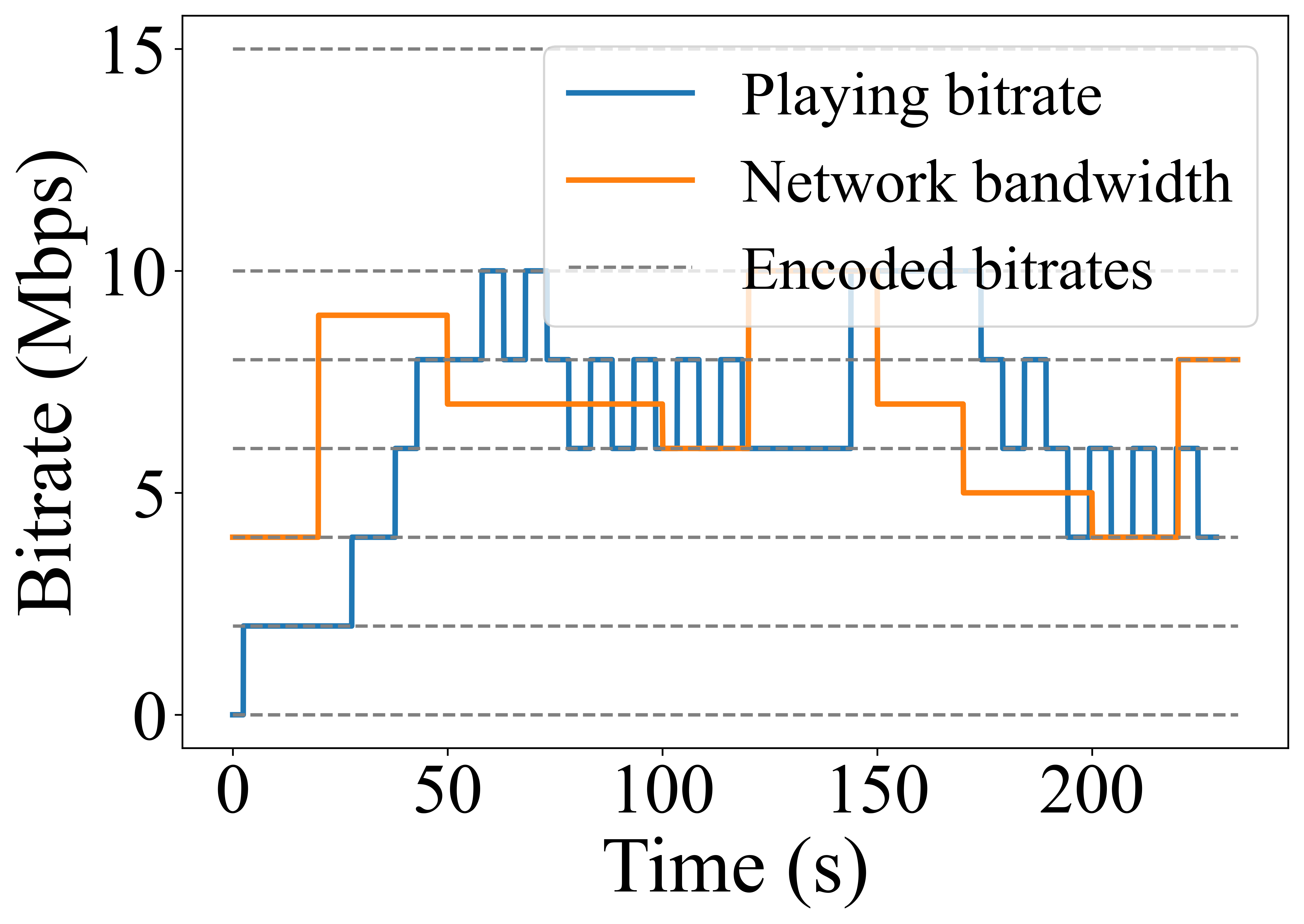}}\hspace{8mm}
	\subfigure{\includegraphics[width=0.36\textwidth]{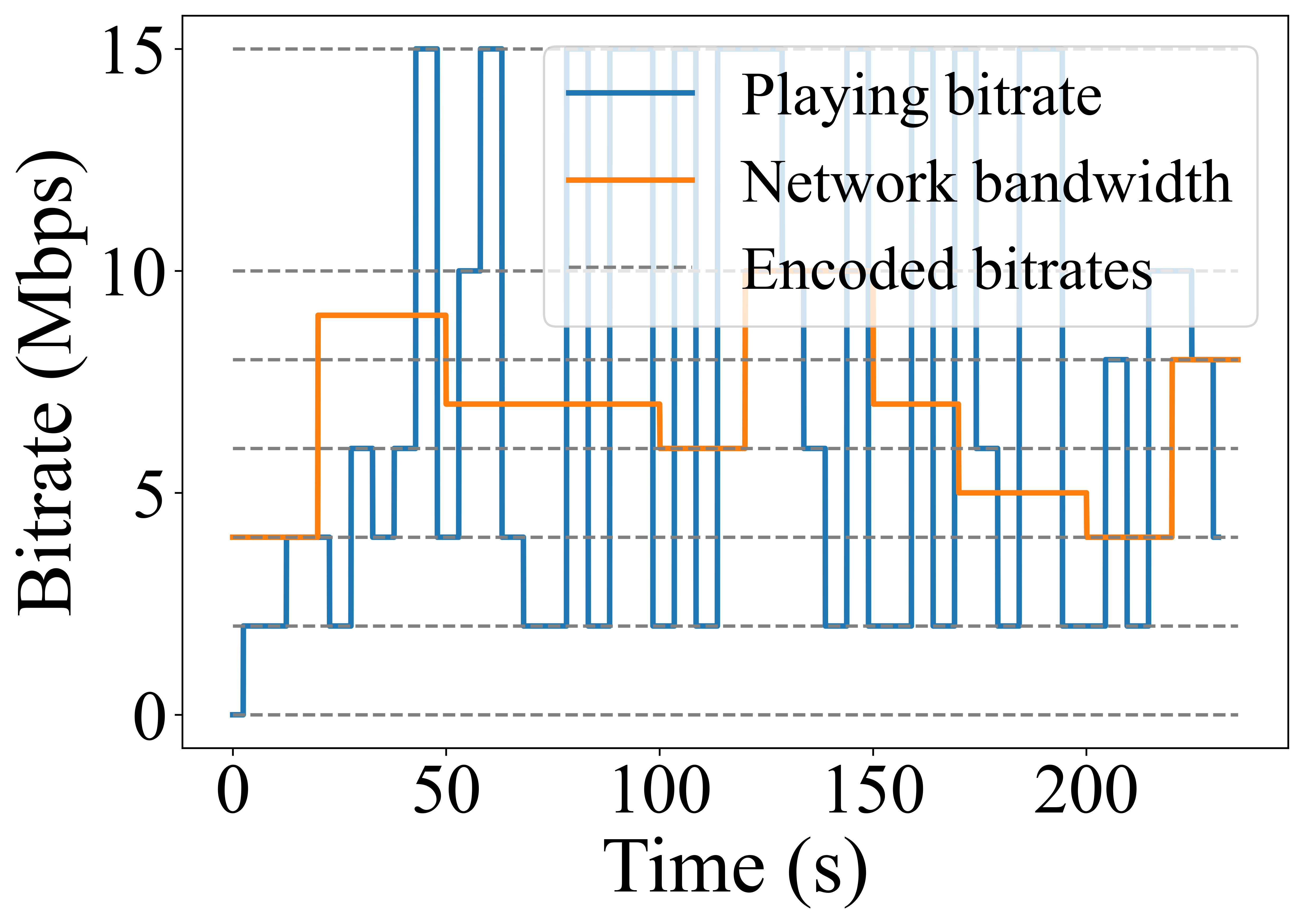}}
	\caption{Variation of playing bitrate over time under bitrate selection of \algName for the homogeneous (left), and heterogeneous (right) head position probability distribution. }
	\label{fig:undr_bitrate2}
\end{figure*}

\section{Comparison Algorithms}
\label{sec:exp}



We compare \algName with \VAts~\cite{ozcinar2017viewport}, \PDash~\cite{xie2017360probdash}, \Flare~\cite{qian2018flare}, \SalVR~\cite{wang2022salientvr}, \Pano~\cite{guan2019pano}, and \Mosaic~\cite{park2021mosaic}, the leading ABR algorithms for \tABR. Our analysis showcases the advancements our approach brings over the state-of-the-art. While some of these algorithms like \Pano, \Flare, and \PDash consider additional factors such as minimizing bitrate variance among tiles within a chunk, they rely on an MPC algorithm \cite{yin2015control} to select the aggregate bitrate for each chunk. This method's reliance on estimated bandwidth throughput poses challenges, potentially hindering their ability to achieve near-optimal QoE, a limitation shared by algorithms like \VAts and \Mosaic. We used the suggested hyper-parameters from each algorithm's respective literature.

\subsection{Experimental Setup}
\label{sec:exp_setup}
We conduct multiple experiments to demonstrate the algorithms' performance under different settings. We use a 500-second video, split into chunks of $2$ seconds and $8$ tiles. Also, \revised{each tile is encoded} in seven different bitrates - 440Kbps, 700Kbps, 1.35Mbps, 2.14Mbps, 4.1Mbps, 8.2Mbps, and 16.5Mbps. Similar to Section \ref{sec:ex_simple}, we use a logarithmic utility function. The list of available bitrates, size of segments, and utility values are listed in Table~\ref{tbl:realExp_bitrates}. \revised{Although \algName performs better using larger buffers, we limit the buffer capacity to $Q_{\max} = 128\delta$, which is equivalent to $32$ seconds of \threesixty playback time, that falls within the range of suggested buffer capacity for VOD streaming \cite{huang2014buffer, hao2021buffer} to ensure fairness.
We employ dynamic value selection for $\Delta$, as proposed in~\cite{spiteri2020bola}, and empirically determine $V = 24.0$. Also, we set $\gamma = 0.2$ using the parameter selection methodology for $\gamma$ and $V$ outlined in Section VI of~\cite{spiteri2020bola}. Consequently, in this scenario, the smoothness term of QoE is equivalent to the utility derived from downloading tiles at a bitrate of 2.1Mbps (equivalent to 480p resolution).} We use 4G bandwidth traces from~\cite{bokani2016prehensive} and 4G/LTE bandwidth trace dataset \cite{IDLAB} collected by IDLAB \cite{vanderHooft2016} to simulate the network condition. \revised{We select 14 different traces (network trace index 1 to 14 of the dataset) from 4G/LTE dataset} to evaluate the performance of \algName under different network conditions.
The video is stored on an Apache server. Both server and client use Microsoft Windows, 24GB of RAM, and an 8-core 3Ghz Intel Core-i7 CPU. We use Chrome DevTools API \cite{chrmDev} to transfer the video between server and client and emulate the network condition. We fetch the bandwidth capacity from the 4G/LTE dataset and inject it into the Chrome DevTools to limit the download capacity between the server and the client. In our experiments, \AF includes a single tile and unless otherwise mentioned, to capture the actual \AF of the head position probability values, we generate the navigation graph \cite{park2019navigation} for \threesixty video using public VR head traces~\cite{wu2017dataset}.

\begin{table*}[!t]
	\caption{\TOMM{Available bitrates and utility values of them for experiments of Sections \ref{sec:Exp2}, and \ref{sec:Exp3}} }
	\label{tbl:realExp_bitrates}
	\begin{center}
		\begin{tabular}[P]{|L{3cm}|c|c|c|c|c|c|c|}
			\hline
			\textbf{bitrate (Mbps)} & 0.44 & 0.7 & 1.35 & 2.14 & 4.1 & 8.2 & 16.5 \\
			\hline
			\textbf{Sizes (Mb)} & 0.88 &  1.4 &  2.7 & 4.28 & 8.2 & 16.4  & 33.0 \\
			\hline
			\textbf{Utility values} & 1.000 & 1.667 & 2.617 & 3.282 & 4.220 & 5.220 & 6.229 \\
			\hline

		\end{tabular}
	\end{center}
\end{table*}

\begin{figure*}
	\centering
    \hspace{0.025\textwidth}
    \subfigure{\includegraphics[width=0.95\textwidth]{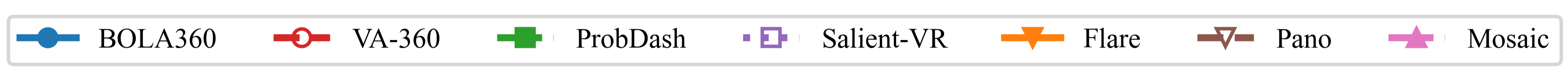}}\vspace{-2mm}
    \subfigure{\includegraphics[width=0.48\textwidth]{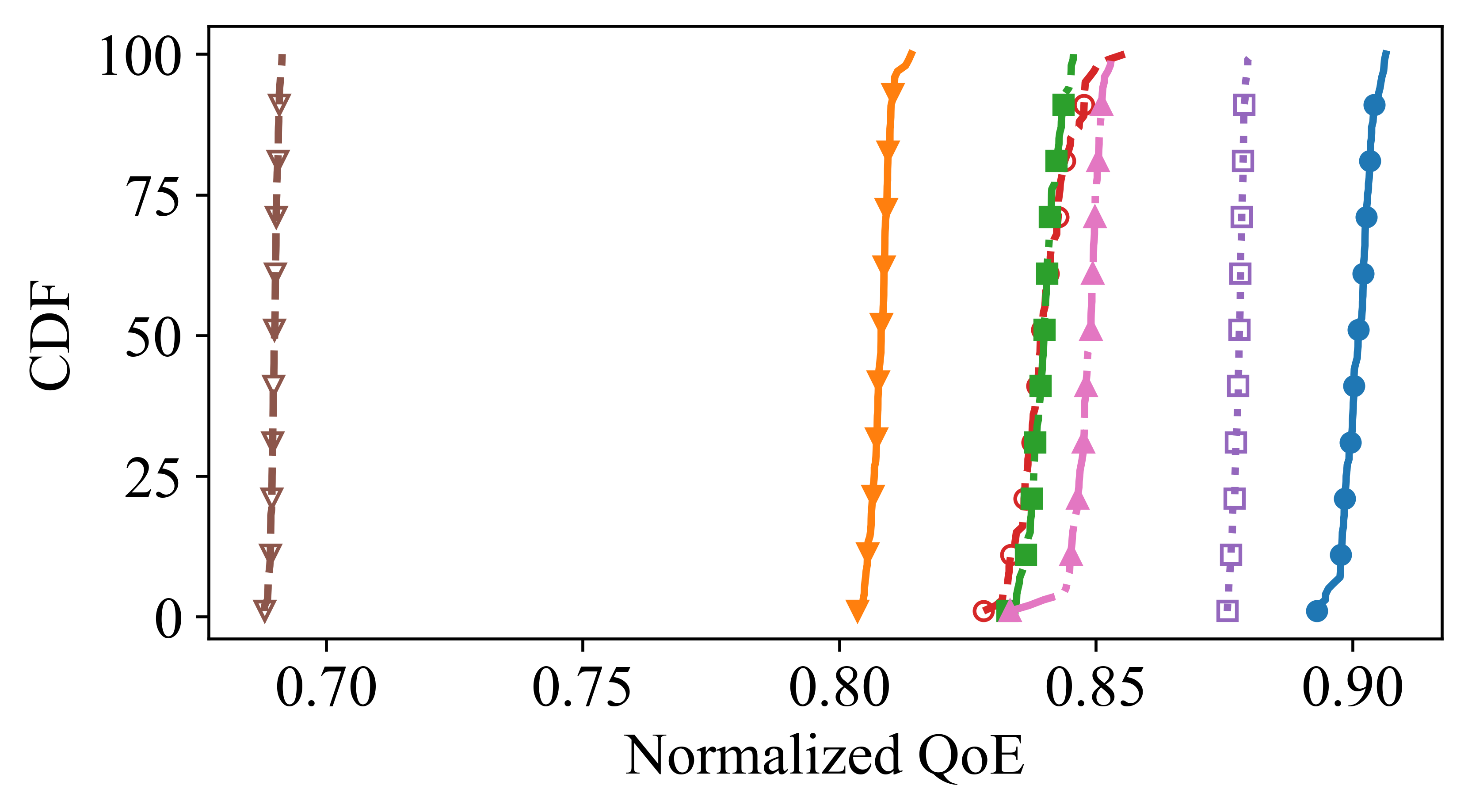}}\hspace{2mm}
    \subfigure{\includegraphics[width=0.48\textwidth]{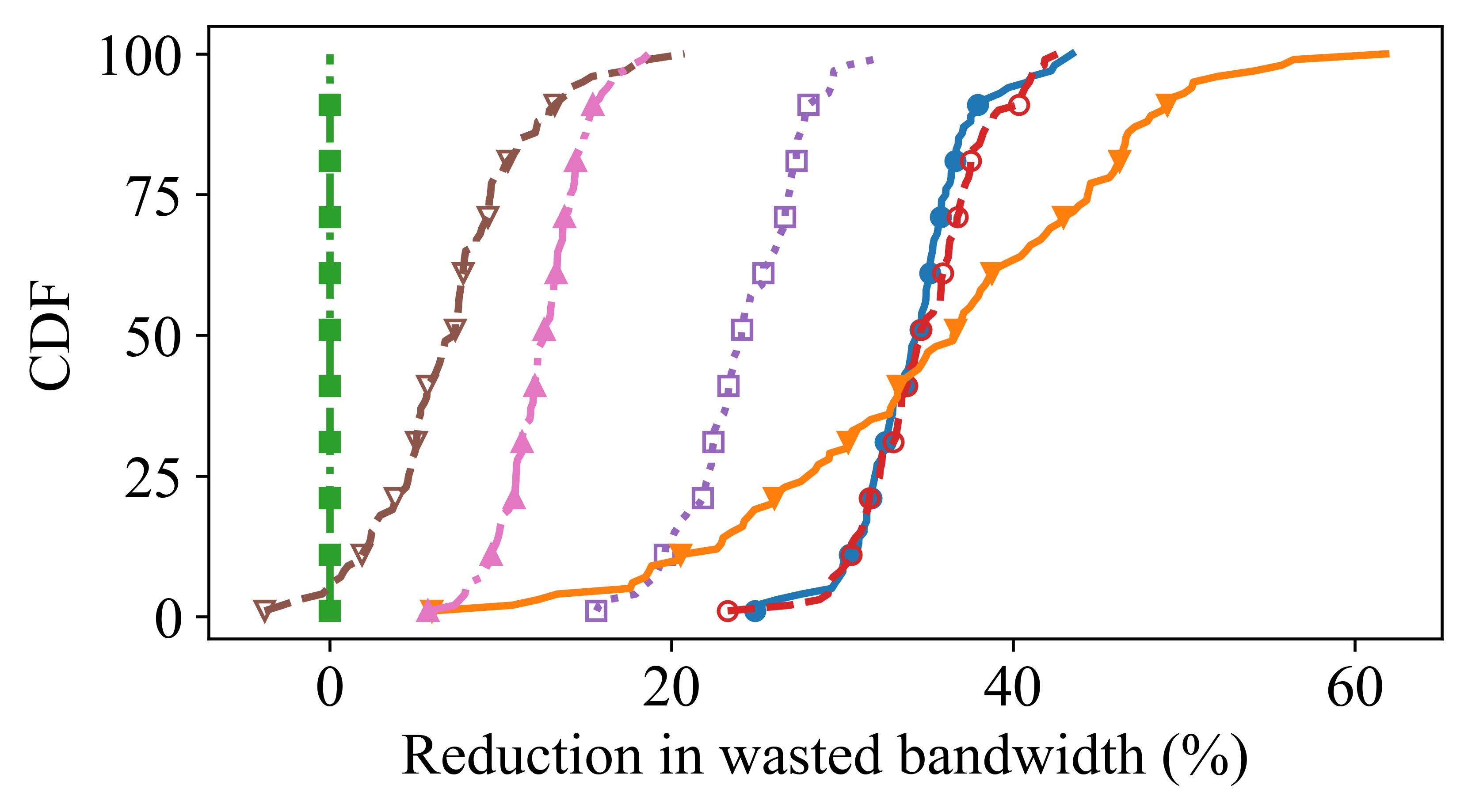}}

	\caption{\TOMM{The CDF of normalized QoE (left) and reduction in wasted bandwidth compared with the \PDash (right)} for \algName and comparison algorithms using real network and head movement traces. The QoE of \algName was higher than the QoE of other algorithms in all test trials. }
	\label{fig:realExp_QoE_wasted}
\end{figure*}

\subsection{Performance Evaluation using Real Network and Head Movement Traces}
\label{sec:Exp2}
First, we compare the performance of \algName with others using real network and head movement traces. We use a representative real 4G bandwidth trace from~\cite{bokani2016prehensive} for this comparison. We report playing bitrate, the rebuffering ratio (percentage of length of video considered as a rebuffering), and normalized QoE (QoE divided by the QoE of optimal offline algorithm) of \algName, and state-of-the-art algorithms. \TOMM{Additionally, we report the reduction in wasted bandwidth—the fraction of bandwidth used to download unseen tiles—for all algorithms, relative to \PDash, which exhibited the highest wasted bandwidth among the compared algorithms.}  Note that the average playing bitrate reported in Figure~\ref{fig:realExp_performances} is calculated over the tiles the user has seen inside \AF. We report the results of 100 different trials, where for each trial, we sample the user's head direction from the head position probability distribution and use the same network traces and algorithm parameters. The CDF plot of average playing bitrates, rebuffering ratio, and normalized QoE values of 100 different trials is reported in Figure~\ref{fig:realExp_QoE_wasted} and Figure~\ref{fig:realExp_performances}. 



\begin{figure*}
	\vspace{-5mm}
	\centering
    \hspace{0.025\textwidth}
    \subfigure{\includegraphics[width=0.95\textwidth]{figures/header.png}}\vspace{-2mm}
    \subfigure{\includegraphics[width=0.32\textwidth]{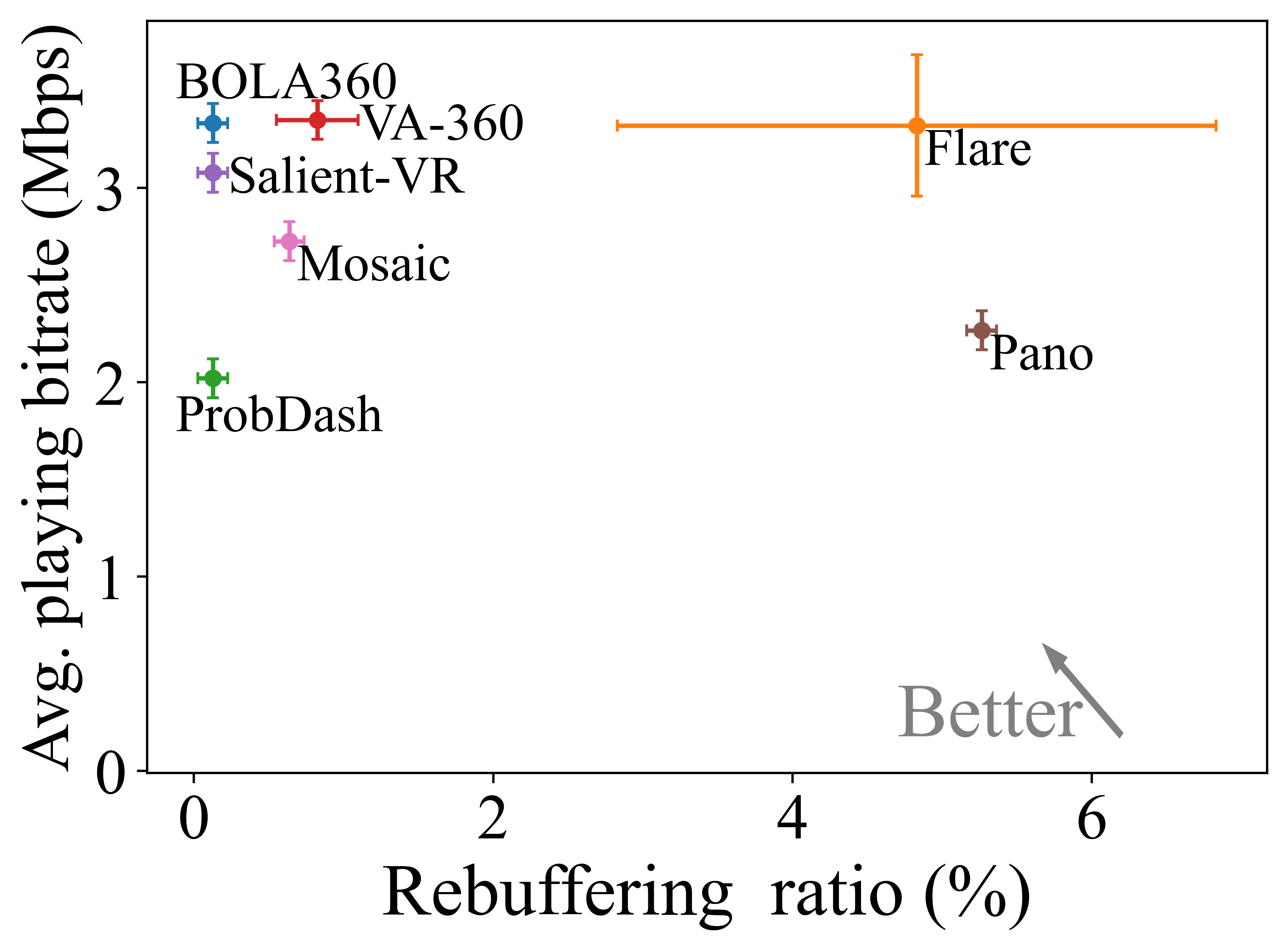}}
	\subfigure{\includegraphics[width=0.32\textwidth]{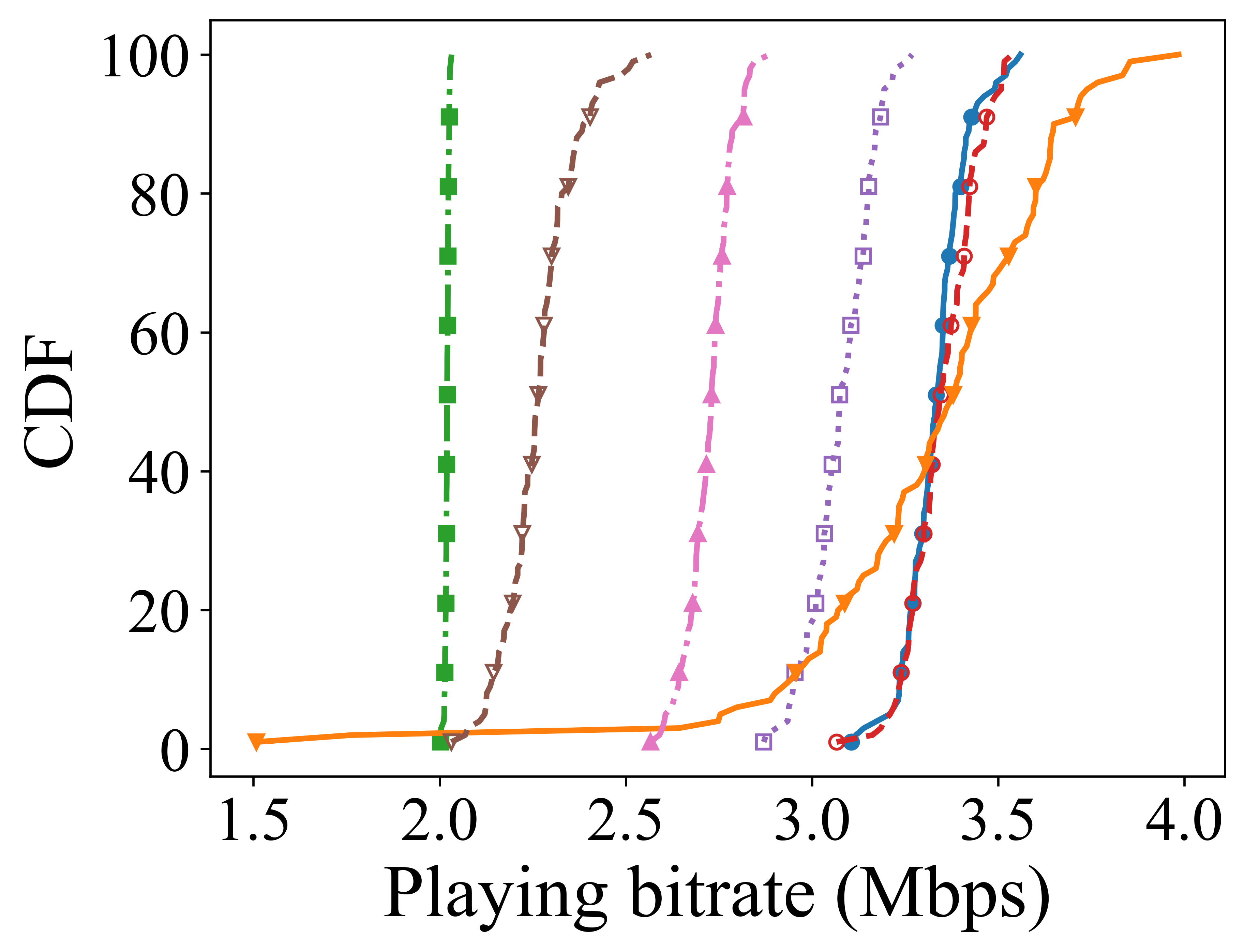}}
	\subfigure{\includegraphics[width=0.32\textwidth]{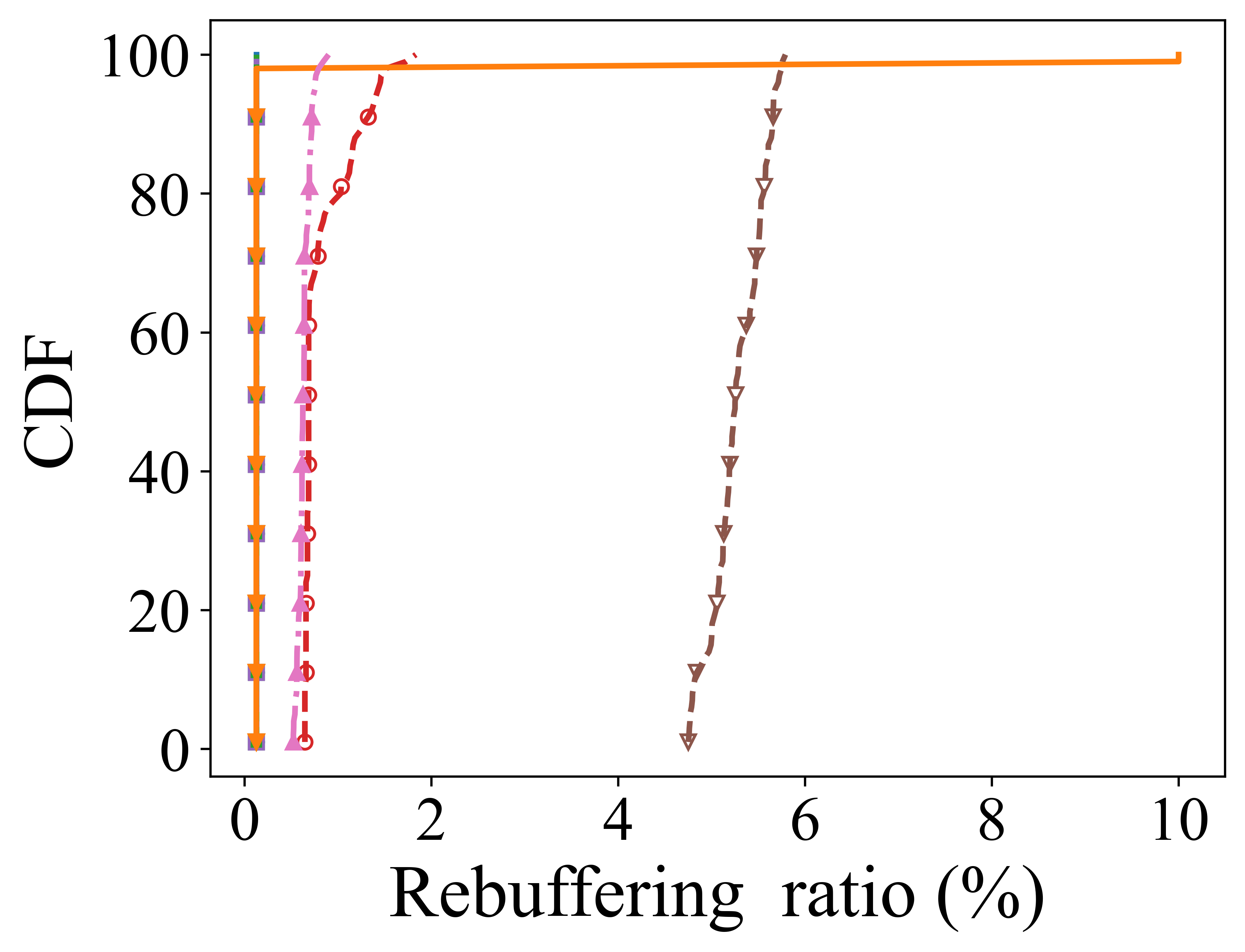}}
	\caption{Average playing bitrate vs. rebuffering ratio (left), the CDF of playing bitrate (midle), and rebuffering ratio (right) of \algName and comparison algorithms using real network and head movement traces. The average playing bitrate of \algName is $3.34Mbps$, while this value for \SalVR, and \Mosaic and \VAts are  $3.07Mbps$, $2.72Mbps$, and $3.35Mbps$. The average rebuffering for \algName, \SalVR, \Mosaic, and \VAts were $0.12\%$, $0.13\%$, $0.63\%$ and $0.83\%$. }
	\label{fig:realExp_performances}
    \vspace{-3mm}
\end{figure*}

The results in Figures ~\ref{fig:realExp_QoE_wasted} and~\ref{fig:realExp_performances} show that \algName outperforms other comparison algorithms in QoE, and its playing bitrate was slightly less than the playing bitrate of \VAts, which prepares the highest playing bitrates among comparison algorithms. \VAts selects relatively high bitrates for all tiles of a chunk while \algName efficiently distributes the available bitrates among different tiles such that \algName can achieve a lower rebuffering ratio. \TOMM{Additionally, the reduction in wasted bandwidth compared to \PDash is illustrated in Figure~\ref{fig:realExp_QoE_wasted}. On average, \algName achieves a $34.3\%$ reduction in wasted bandwidth compared to \PDash.}

\textbf{Key takeaway.} \algName outperforms comparison algorithms in terms of QoE as it is designed to maximize it. Besides, no algorithm outperforms \algName on both playing bitrate and rebuffering ratio at the same time.


\subsection{Impact of Network Bandwidth}
\label{sec:Exp3}

In this experiment, we investigate the impact of different network profiles on the performance of ABR algorithms. We use  network traces index 1 to 14 from the 4G/LTE dataset~\cite{IDLAB} to generate the bandwidth throughput. We use the same video and algorithm/problem parameters (details in Section \ref{sec:exp_setup}) for all algorithms to capture the impact of the network capacity on their performance. 

Figure \ref{fig:network_QoE_bitrate} shows the average normalized QoE, playback delay, rebuffering ratio, and average playing bitrate of \algName and five comparison algorithms over 100 trials for 14 network profiles. \algName stands as the best algorithm in all 14 experiments. In this experiment, \VAts selects relatively higher bitrates compared to other algorithms, while its high rebuffering, shown in Figure \ref{fig:network_rebuff_d2r}, lowers the QoE of this algorithm. In addition, the playback delay of \algName and comparison algorithms are shown in Figure \ref{fig:network_rebuff_d2r}. The playback delay of \VAts was the lowest in all experiments. That clearly shows the trade-off between having low rebuffering or low playback delay. The results show that \algName keeps the playback delay under $14.9$ seconds with an average rebuffering ratio of less than $0.4\%$. This playback delay is consisent with the result of Theorem~\ref{thm:buffer_size} and the fact that \algName tries to keep the buffer level high. \TOMM{Additionally, the reduction in wasted bandwidth, as compared to \PDash, for \algName and other comparison algorithms over 100 trials across 14 network profiles is illustrated in Figure~\ref{fig:network_wasted}. \Flare achieves the highest reduction in wasted bandwidth, decreasing it by an average of $58.5\%$. For \algName and \Pano, the reductions were $44.9\%$ and $32.5\%$, respectively.}

\textbf{Key takeaway.}  Networks with high fluctuations (e.g., profile indexes 2 and 7) cause a higher rebuffering ratio; nevertheless, \algName keeps QoE and playing bitrate relatively high in all profiles and outperforms all alternatives.

\begin{figure*}
	\centering
    \subfigure{\includegraphics[width=\fixWideWidth \linewidth, height=\fixWideHeight \linewidth]{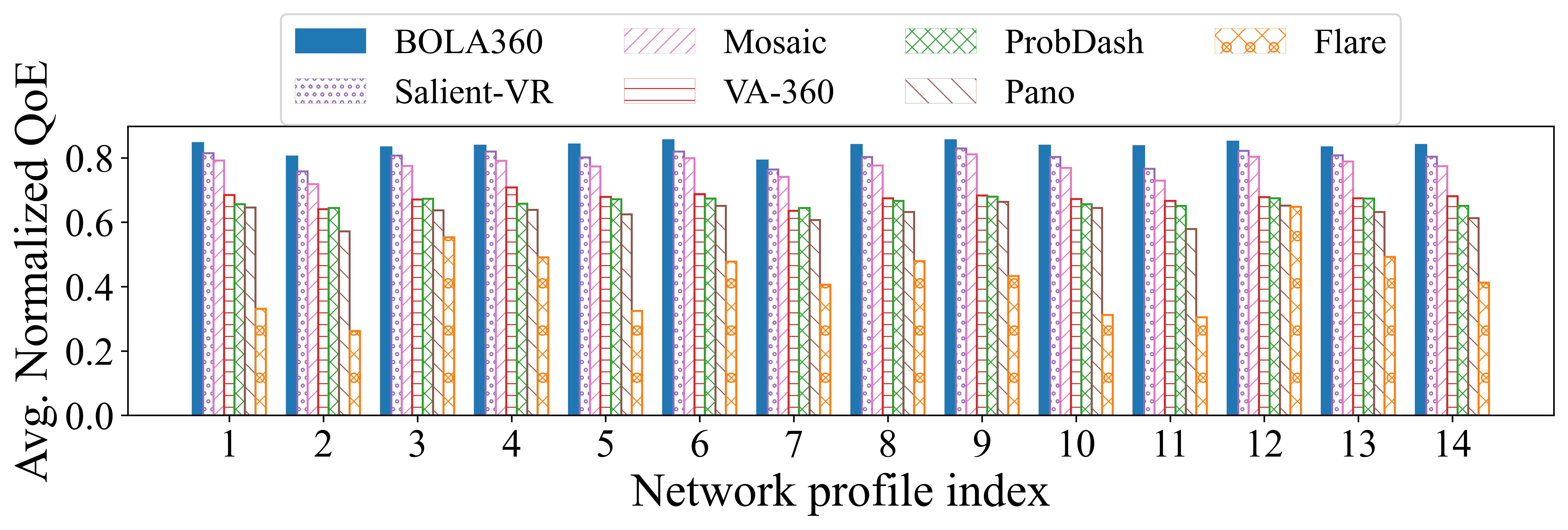}}
	\subfigure{\includegraphics[width=\fixWideWidth \linewidth, height=\fixWideHeight \linewidth]{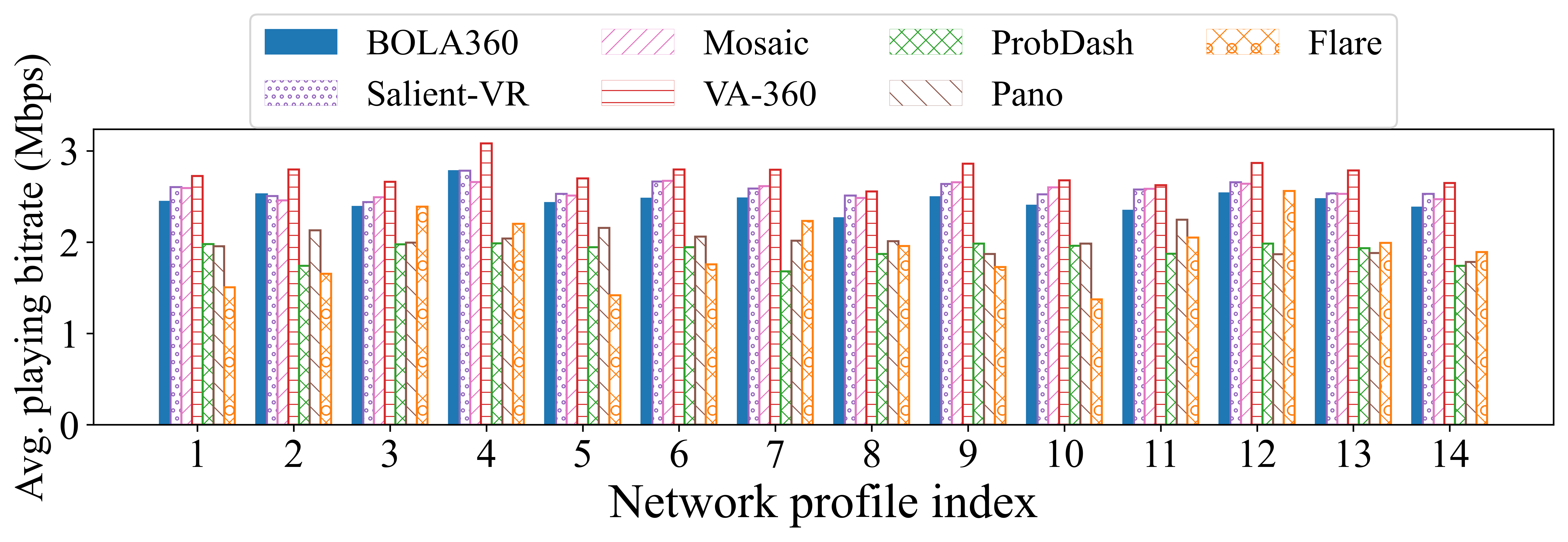}}
	
	\caption{The average normalized QoE (left) and average playing bitrate (right) over the bitrate selection of \algName and other comparison algorithms for 14 different network profiles and 100 trials. In terms of QoE, \algName outperforms others in all profiles. \camRev{On average, \algName provides about $6\%$ improvement to the QoE of \SalVR, and $110\%$ to QoE of \Flare.}}
	\label{fig:network_QoE_bitrate}
\end{figure*}



\begin{figure*}
	\centering
    \subfigure{\includegraphics[width=\fixWideWidth \linewidth, height=\fixWideHeight \linewidth]{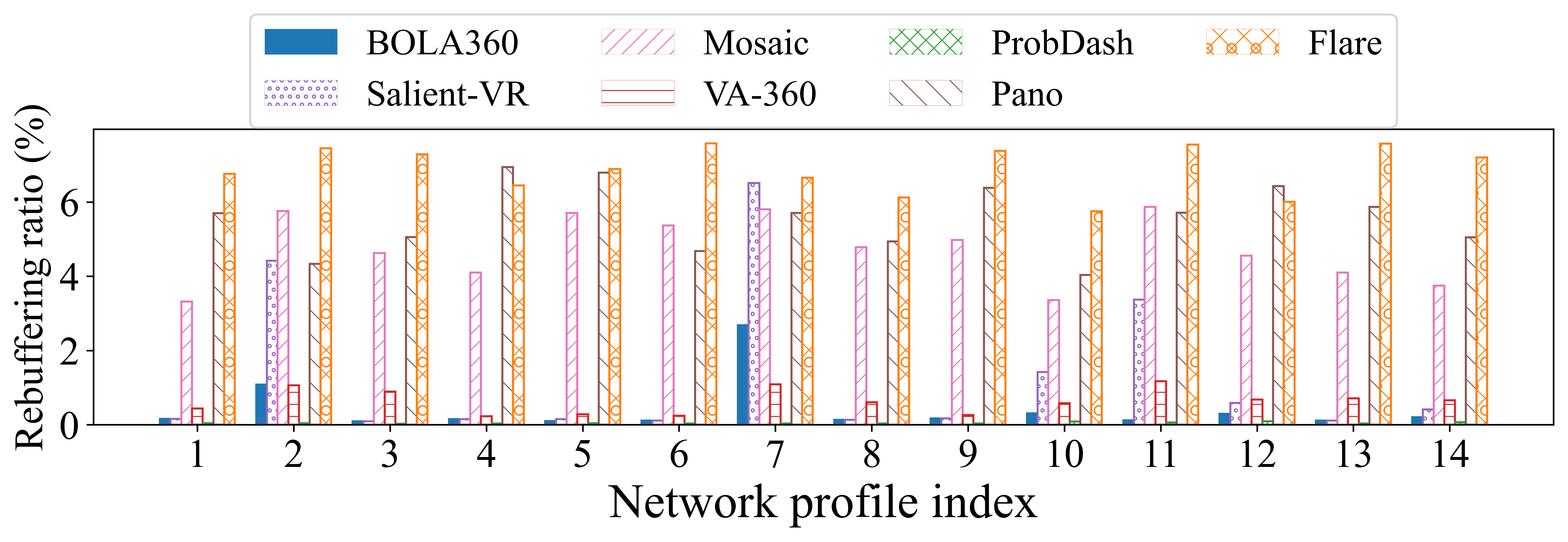}}
	\subfigure{\includegraphics[width=\fixWideWidth \linewidth, height=\fixWideHeight \linewidth]{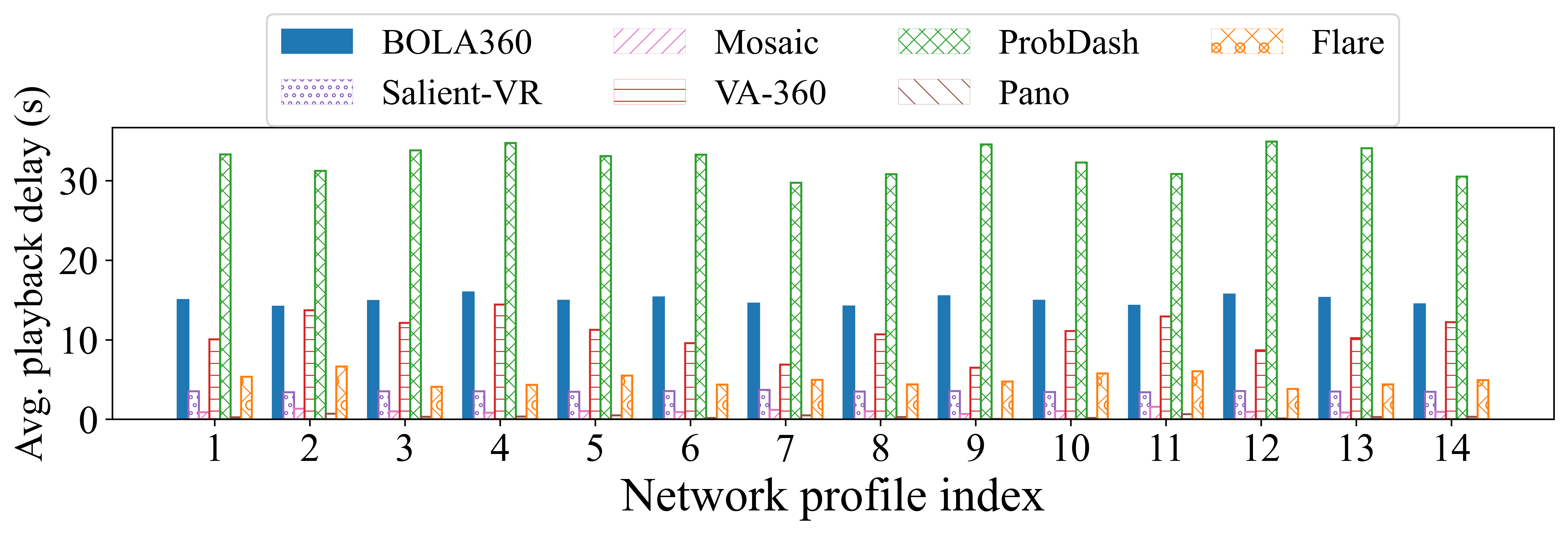}}
	\caption{The average rebuffering ratio (left) and average playback delay (right) over the bitrate selection of \algName and comparison algorithms for 14 different network profiles and 100 trials. \Pano and \Flare usually show higher rebuffering than the other algorithms, while their playback delay is shorter. The average playback delay for \algName is $14.9$ seconds. }
	\label{fig:network_rebuff_d2r}
\end{figure*}



\subsection{Impact of Head Position Probabilities}
\label{sec:Exp4}
The head position probability values directly impact the QoE characterized in Equations~\eqref{eq:UK} and~\eqref{eq:RK}; hence, the performance of algorithms varies depending on these probabilities. To observe the impact of head position probabilities on the performance of ABR algorithms, we define 12 probability distributions and evaluate the performance of \algName and other algorithms against them while the rest of the setting is similar to the experiment in Section \ref{sec:Exp2}. Specifically, for each chunk $k$, we replace the set of probabilities with the probabilities calculated from Equation~\eqref{eq:prob_profile_vals}. 
Each head position probability distribution could be interpreted as a different \threesixty video file. 


We generate the head position probability distributions based on three parameters $D_{\text{pos}}(k)$, $r(k)$, and $\alpha_p(k)$. Parameter $D_{\text{pos}}(k)$ shows the number of tiles that there is a chance to be inside \AF for chunk $k$; $r(k)$ represents the ratio between the minimum and maximum probabilities among probabilities of tiles for chunk $k$. Last, parameter $\alpha_p(k)$ determines the heterogeneity of the head position probability values for chunk $k$. We define the probability of $i^{th}$ highest probable tile as a function of $\alpha_p(k)$ as follows.
\begin{equation}
\label{eq:prob_profile_vals}
    p_{i}(k) =  \frac{1 - \alpha_p(k)} {D_{\text{pos}}(k)} + \alpha_p(k) p_{i}^{(L)}\bigg(k, r(k)\bigg),
\end{equation}
 \revised{where $p_{i}^{(L)}(k, r(k))$ shows the probability of $i^{th}$ highest probable tile assuming a fixed step between probabilities in ascending order.} With the above definition in Equation~\eqref{eq:prob_profile_vals}, $\alpha_p(k)$ determines the range of probabilities where $\alpha_p(k) = 0$ signifies uniform tile probabilities, while $\alpha_p(k) = 1$ indicates a wider probability range, reflecting diverse head position probability values. A justification for this model as a representative of real-world head direction prediction is that $(D - D_{\text{pos}})$ shows the number of tiles the \AF prediction model is confident that they will be out of \AF. On the other hand, $\alpha_p(k)$ shows how concentrated the \AF prediction model is. \revised{We use $r(k) = 0.05$ for all distributions. Although it's impractical to cover every possible distribution, our selection involves a low value for $r(k)$, and wide range of values for $D_{\text{pos}}$ and $\alpha_p(k)$ to achieve broader representation.} Details of the 12 probability distributions used in this section are outlined in Table \ref{tbl:prob_profiles}.
\begin{figure}[!t]
    \centering
    \begin{minipage}{\fixWideWidth \linewidth} 
        \centering
        \includegraphics[width=\textwidth]{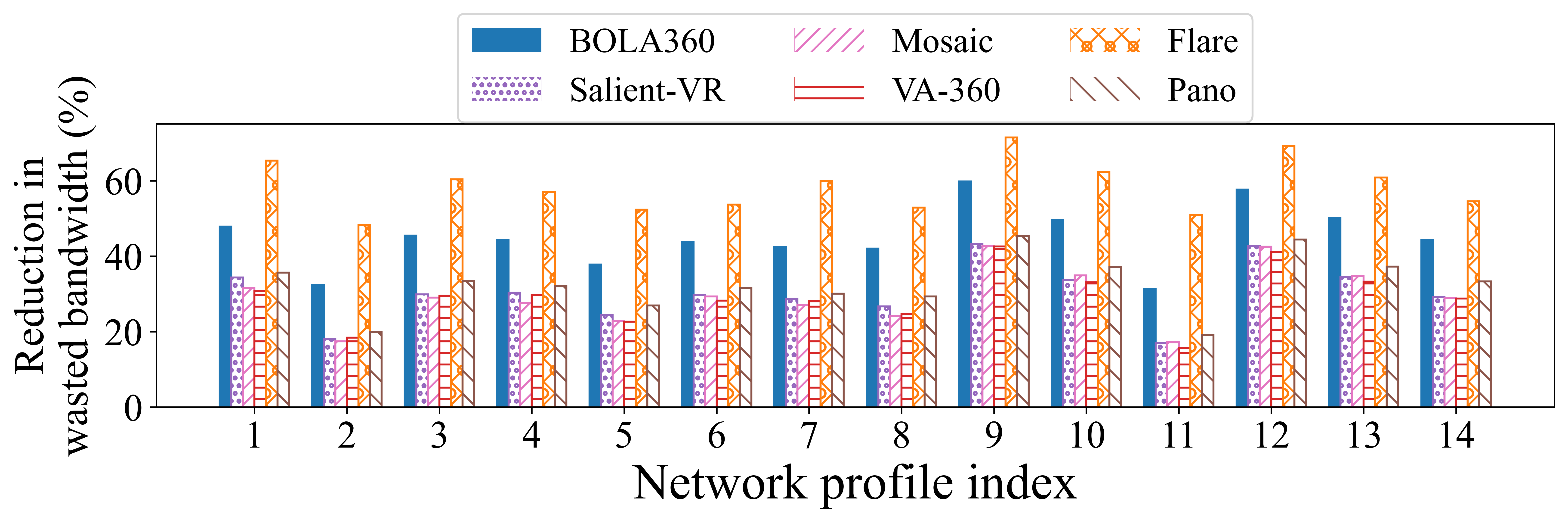}  
        \caption{\TOMM{Reduction in wasted bandwidth compared to \PDash over the bitrate selection of \algName and comparison algorithms using 14 different network distributions over 100 trials. The reduction of the wasted bandwidth under bitrate control of \algName was between $31.3\%$ and $59.8\%$ over different network profiles.}}
        \label{fig:network_wasted}
    \end{minipage}
    \hfill
    \begin{minipage}{\fixWideWidth \linewidth}  
        \centering
        \includegraphics[width=\textwidth]{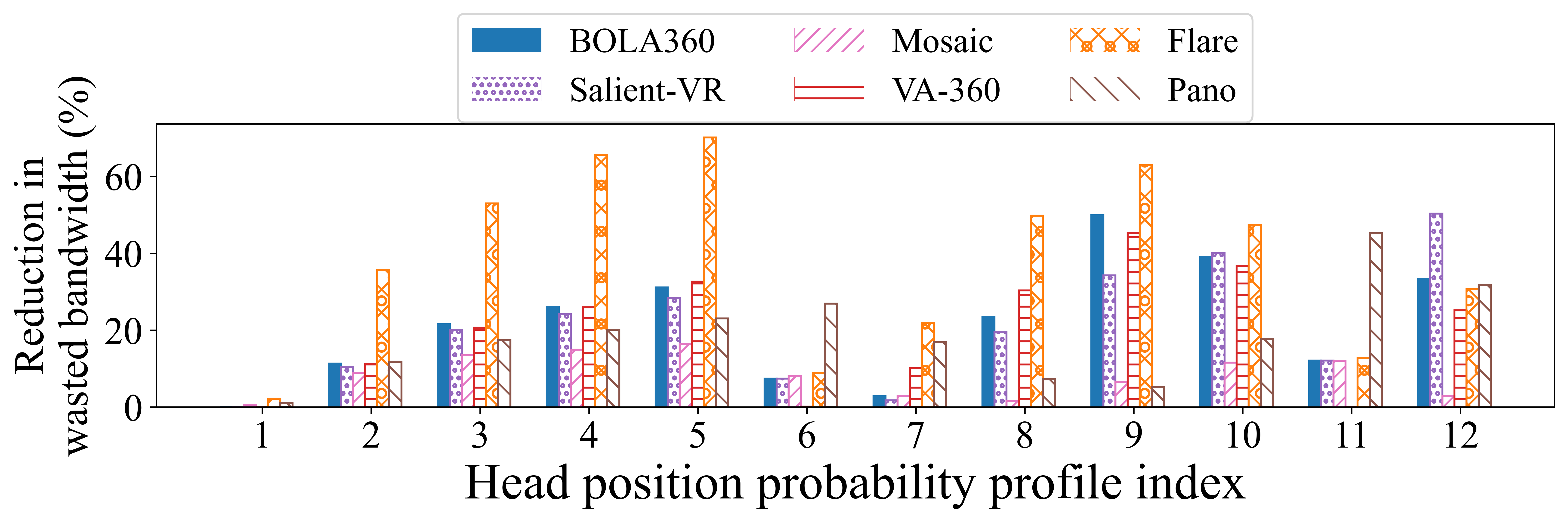}  
        \caption{\TOMM{Reduction in wasted bandwidth compared to \PDash over the bitrate selection of \algName and comparison algorithms using 12 different head position probability distributions over 100 trials. The reduction of the wasted bandwidth under bitrate control of \algName was between $1.2\%$ and $49.9\%$.}}
        \label{fig:prob_wasted}
    \end{minipage}
\end{figure}
 
\begin{table*}[!t]
	\caption{The details of the probability distributions used in the experiment of Section~\ref{sec:Exp4}}
	\label{tbl:prob_profiles}
	\begin{center}
		\begin{tabular}[P]{|c|c|c|c|c|c|c|c|c|c|c|c|c|}
			\hline
			  \textbf{Probability profile index} &  \textbf{1}&  \textbf{2} &   \textbf{3} &   \textbf{4} &   \textbf{5} &   \textbf{6} &   \textbf{7} &   \textbf{8} &   \textbf{9} &   \textbf{10} &   \textbf{11} &   \textbf{12}\\
            \hline
            $D_{\text{pos}}(k)$ & 8 & 8 & 8 & 8 & 8 & 4 & 4 & 4 & 4 & 4 & 2 & 2\\
            \hline
            $\alpha_p(k)$ & 0 & 0.25 & 0.5 & 0.75 & 1 & 0 & 0.25 & 0.5 & 0.75 & 1 & 0 & 0.5\\
            \hline

		\end{tabular}
	\end{center}
\end{table*}


We report the average normalized QoE, playback delay, rebuffering ratio, and average playing bitrate of 100 trials of \algName and comparison algorithms using each head position probability distribution profile in Figures~\ref{fig:prob_QoE_bitrate} (average normalized QoE and playing bitrate), and \ref{fig:prob_rebuff_d2r} (average rebuffering ratio and playback delay).  Figure~\ref{fig:prob_QoE_bitrate}  shows that \algName achieves slightly higher QoE when the prediction of \AF is concentrated on fewer number of tiles. A notable observation demonstrates that \algName kept the QoE at a high value for every probability profile, while the achieved playing bitrate is promising, and kept rebuffering ratio close to the lowest among all algorithms. \TOMM{Finally, Figure~\ref{fig:prob_wasted} presents the reduction in wasted bandwidth relative to \PDash for \algName and other comparison algorithms, based on 100 trials across 12 different head position probability distributions. The reduction in wasted bandwidth achieved by \algName ranged from $1.2\%$ to $49.9\%$, depending on the accuracy of \AF prediction across these head position probability profiles.}


\textbf{Key takeway.} The playing bitrate of \algName and most comparison algorithms improves when the head position prediction is concentrated on fewer tiles. Meanwhile, \algName improves the playing bitrate more than other comparison algorithms as the head position values concentrate on fewer tiles.

\begin{figure*}
	\centering
    \subfigure{\includegraphics[width=\fixWideWidth \linewidth, height=\fixWideHeight \linewidth]{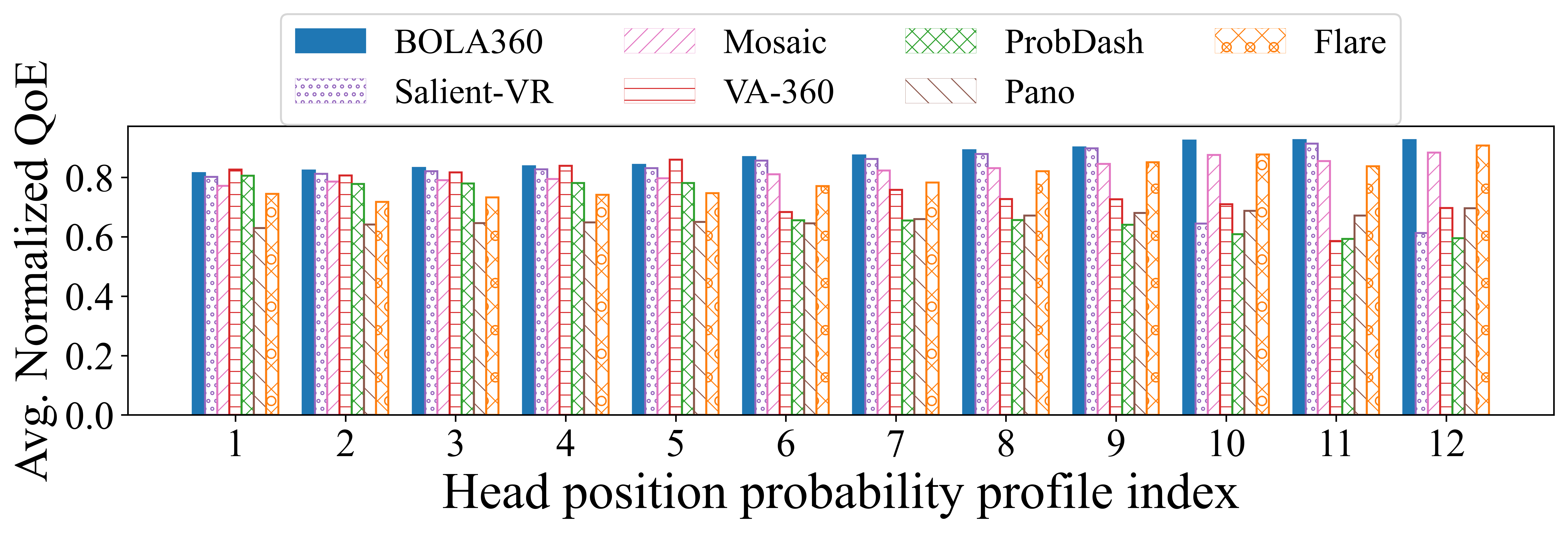}}
	\subfigure{\includegraphics[width=\fixWideWidth \linewidth, height=\fixWideHeight \linewidth]{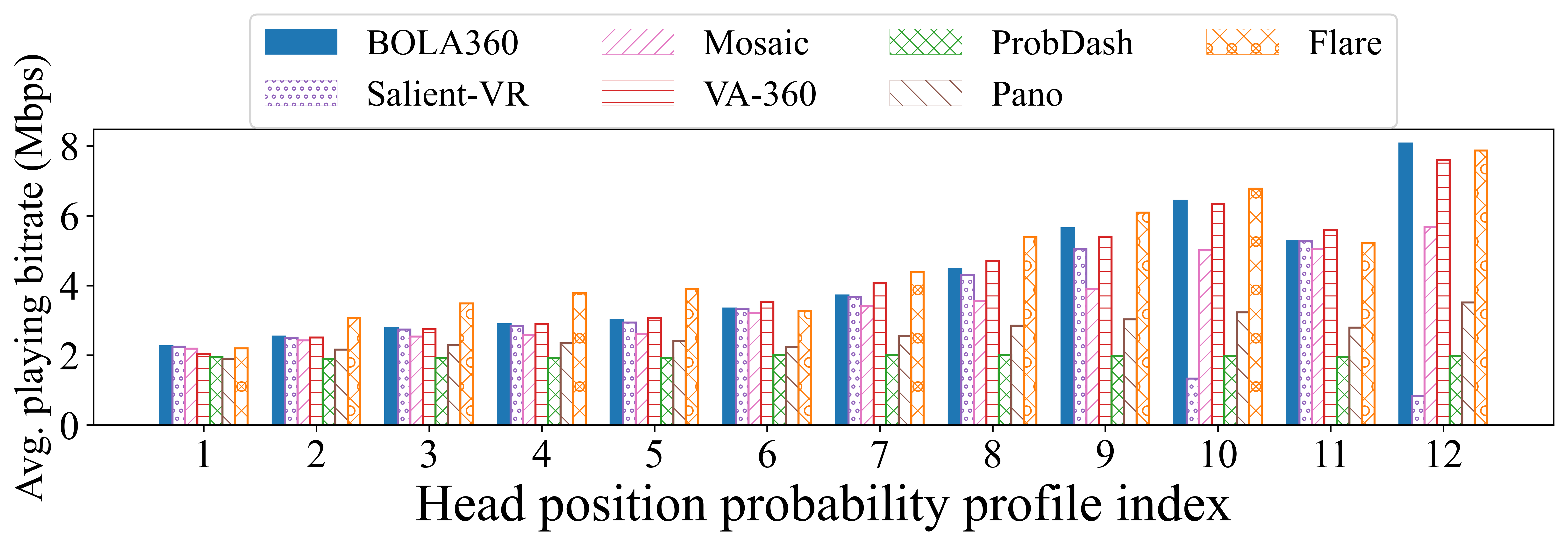}}
	\caption{The average normalized QoE (left) and average playing bitrate (right) over the bitrate selection of \algName and comparison algorithms using 12 different head position probability distributions over 100 trials. On average, \algName provides $9\%$ improvement to the QoE of \SalVR, and $31\%$ to QoE of \Pano. }
	\label{fig:prob_QoE_bitrate}
\end{figure*}



\begin{figure*}
	\centering
    \subfigure{\includegraphics[width=\fixWideWidth \linewidth, height=\fixWideHeight \linewidth]{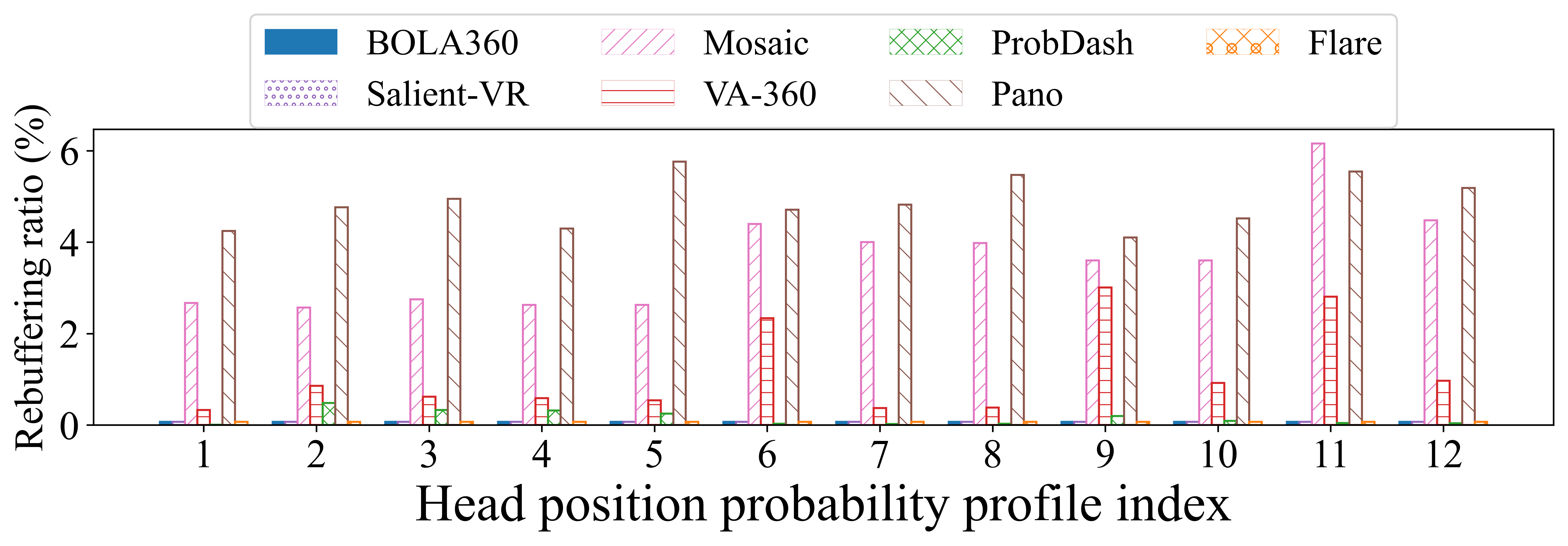}}
	\subfigure{\includegraphics[width=\fixWideWidth \linewidth, height=\fixWideHeight \linewidth]{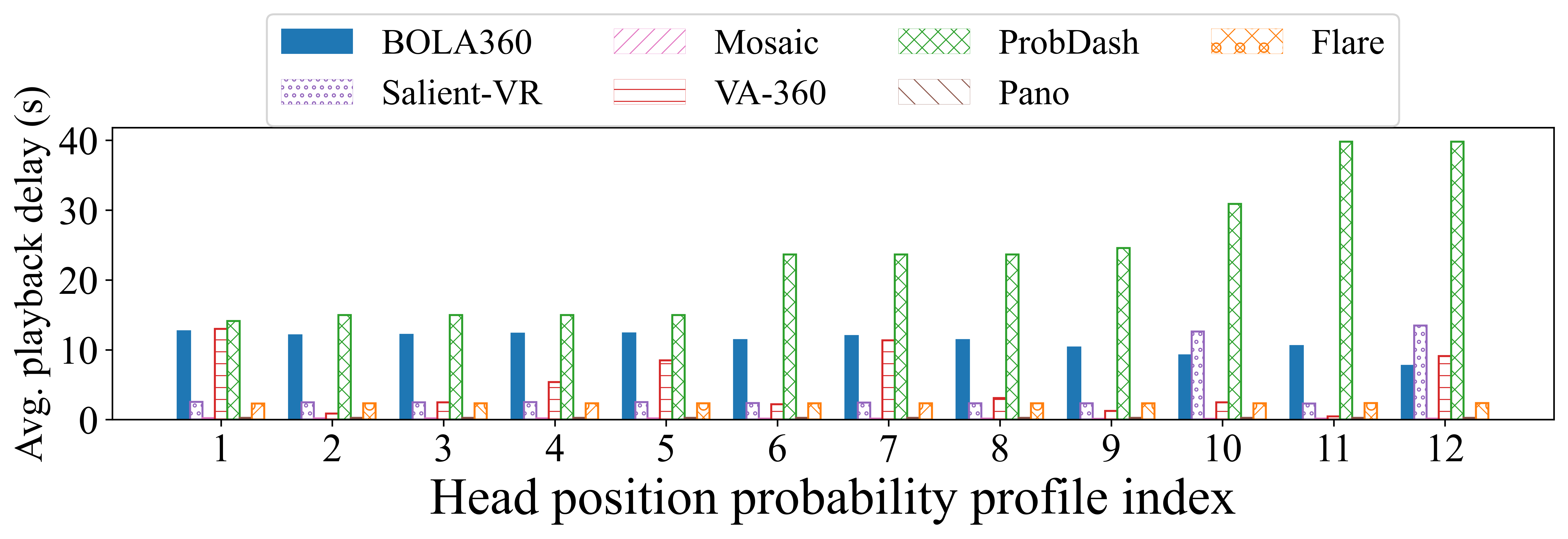}}
	\caption{The average rebuffering ratio (left) and average playback delay (right) over the bitrate selection of \algName and comparison algorithms using 12 different head position probability distributions over 100 trials. \VAts usually results in a high rebuffering ratio and short playback delay. Meanwhile, the rebuffering ratio of \algName is slightly more than the lowest in all experiments. The average playback delay for \algName is $11.1$ seconds. }
	\label{fig:prob_rebuff_d2r}
\end{figure*}

\subsection{Discussion on the Performance of Predictions-based Baseline Algorithms}
\label{sec:discussion}
\TOMM{This section provides details on why the baseline or state-of-the-art algorithm used in Section \ref{sec:exp_setup} may fail to perform well in particular scenarios, and they cannot guarantee their performance in the worst-case scenario. All of \VAts, \PDash, \Flare, \Pano, and \SalVR algorithms take action based on the prediction of bandwidth that is given to them. The accuracy of this prediction significantly impacts the performance of these algorithms such that a prediction with an error may result in a significant difference between the performance of the ABR algorithm and the performance of the optimal offline solution. In addition, these algorithms behave similarly to the \tABR with different values of $\gamma$. For example, for tiny values of $\gamma$, the bitrate level of the segments are much more important to the user than the smoothness of streaming. However, these  algorithms take similar actions as they take in the case of a large value of $\gamma$.}

\section{\algName Enhancements}
\label{sec:heuristics}
\algName is meticulously designed to excel under all conceivable network conditions, including the most challenging worst-case-like scenarios. The aim to achieve a satisfactory performance across all input, however, makes \algName often operate conservatively, refraining from switching to higher bitrates in many real-world situations where worst-case conditions fail to materialize. In this section, we propose \hrREP and \hrPL, two heuristic algorithms to improve the practical performance of \algName could be improved from two perspectives. First, we introduce \hrPL to address the common drawback of buffer-based ABR algorithms in fetching low-quality bitrates during start or seek time or high oscillations time intervals. Secondly, we propose \hrREP to add the tile upgrade into the \algName. The basic \algName algorithm is not designed to replace previously downloaded tiles with higher bitrates, further restricting its adaptability.

\hrPL is a generalized version of \texttt{BOLA-PL} introduced in~\cite{spiteri2019theory}. It aims to reduce the reaction time of the \algName during start and seek times by virtually increasing the buffer level at the start or seek time. The reaction time is the duration from when the first tile is fetched (during start time) or the first seek tile is fetched (during seek time) until bitrate of selected tiles stabilizes.
\hrPL estimates the bandwidth and multiplies it by $50\%$ to establish a safe expected bandwidth. To prevent rebuffering, \hrPL limits the bitrate of each tile based on the estimated bandwidth throughput. More specifically, it restricts the size of the entire chunk to $S_{lim} = Q(t) w_p(t) / 2D$, where $w_p(t)$ denotes the predicted bandwidth capacity at time $t$. \hrPL virtually inserts a proportional number of tiles into the buffer such that the size of the new downloading chunk does not exceed $S_{lim}$.


One of the primary limitations of the basic version of the \algName algorithm is its inability to modify previously downloaded tiles. Specifically, \algName must make decisions about tiles of future chunks, and it cannot replace higher bitrate tiles with previously downloaded, lower quality ones. If the bandwidth capacity experiences a short-term decrease, \algName adjusts the download bitrates to match the new bandwidth capacity by switching to lower bitrates. When the bandwidth increases again, \algName may have already downloaded several tiles with low bitrates, and it cannot replace them with higher quality ones, even if the buffer level and bandwidth capacity are high. Consequently, \algName cannot utilize the entire bandwidth opportunity to optimize QoE. This challenge is addressed by the heuristic called \texttt{BOLA360-REP}.


\TOMM{\hrREP is a variant of \algName that allows for the modification of previously downloaded tiles. Specifically, \hrREP determines whether to download tiles for the next chunk or improve the quality of previously downloaded tiles, depending on the available video length in the buffer. \hrREP uses a danger threshold set at $2 \delta$ and downloads tiles for a new chunk if the available video length in the buffer, $Q_{avl}(t)$, is below this threshold. If $Q_{avl}(t)$ exceeds the danger threshold, \hrREP replaces previously downloaded tiles with higher bitrates.}

\TOMM{When deciding to download tiles for a new chunk, \hrREP selects bitrates according to \algName's decisions. If the decision is to replace previously downloaded tiles, \hrREP identifies tiles where there is at least a two-level difference between the current bitrate and what \algName would select at the current buffer level. \hrREP then downloads and replaces those low-quality tiles. If no low-quality tiles are detected, \hrREP proceeds to download tiles for the next chunk as usual. Overall, \hrREP addresses the limitations of \algName and enhances the quality of experience for users.}

\subsection{Experimental Setup}

We use the parameters from Section~\ref{sec:Exp4} and the head position probability profile~2 defined therein to evaluate the performance of the heuristic extensions, \hrPL and \hrREP. These algorithms are tested under two scenarios: 1) accurate head position probability predictions, and 2) noisy predictions for future chunks. In the first scenario, the head position probabilities provided to the ABR algorithms match the actual head position distribution of the user. This implies that the algorithm has perfect knowledge of the user's head position distribution, even for chunks that will be played far in the future.

In contrast, the second scenario assumes a $10\%$ error in the prediction of head position probabilities for every $\delta$ seconds difference between the current chunk and the chunk for which the ABR algorithm is trying to predict head position probabilities. If the prediction error exceeds $100\%$, the head position prediction is considered unreliable, and the ABR algorithms are instead provided with uniform head position probabilities, where $p_{k,d} = 1/D$. In this case, if the ABR algorithm maintains a high buffer utilization, there would be a long period of time between download time and playback time of each chunk. This long period leads to noisy prediction of head position probabilities at the time of download, resulting in reduced QoE for the user.

\subsection{Experimental Results}
\begin{figure*}[t]
	\centering
	\subfigure{\includegraphics[width=0.32\textwidth]{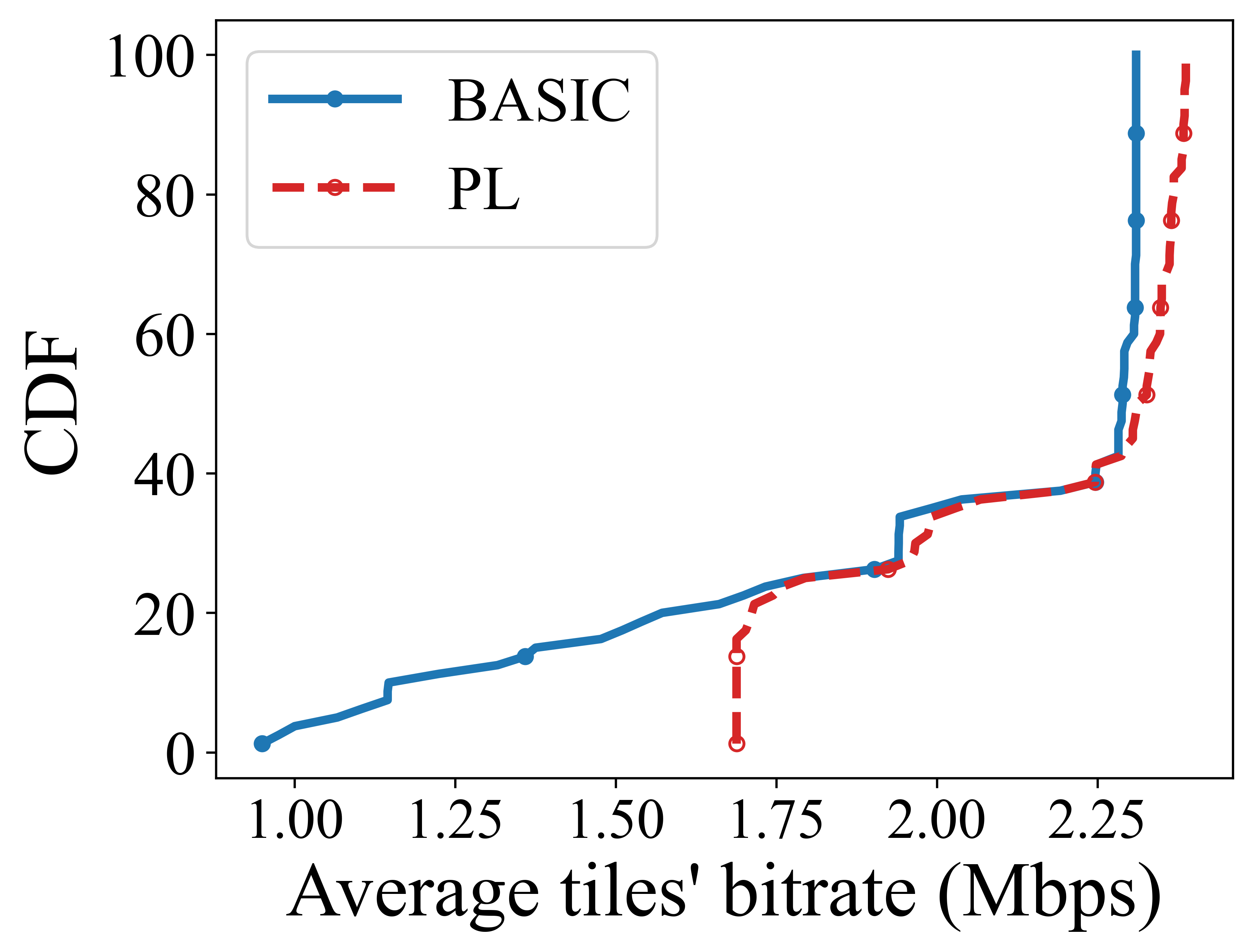}}\hspace{1mm}
	\subfigure{\includegraphics[width=0.32\textwidth]{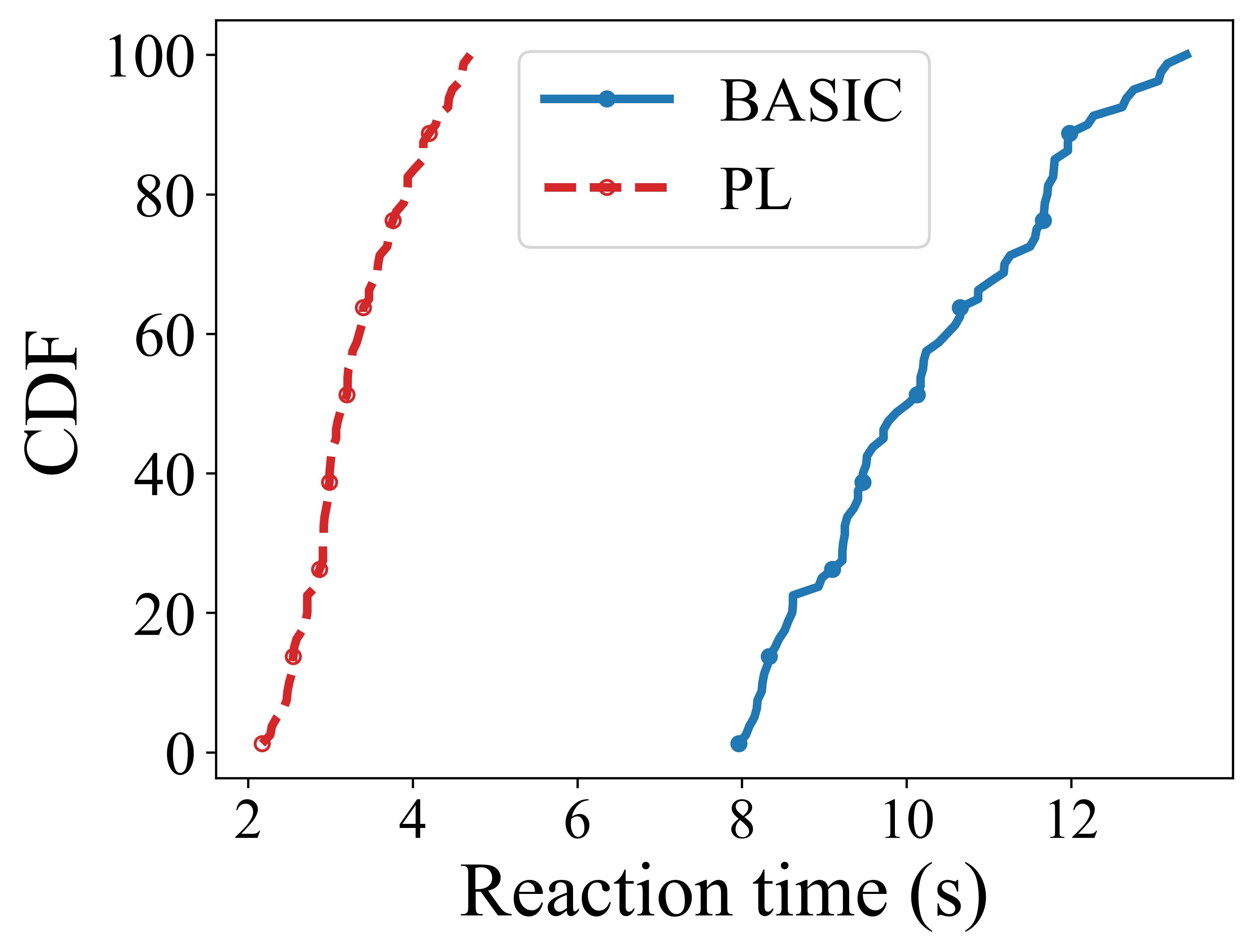}}\hspace{1mm}
	\subfigure{\includegraphics[width=0.32\textwidth]{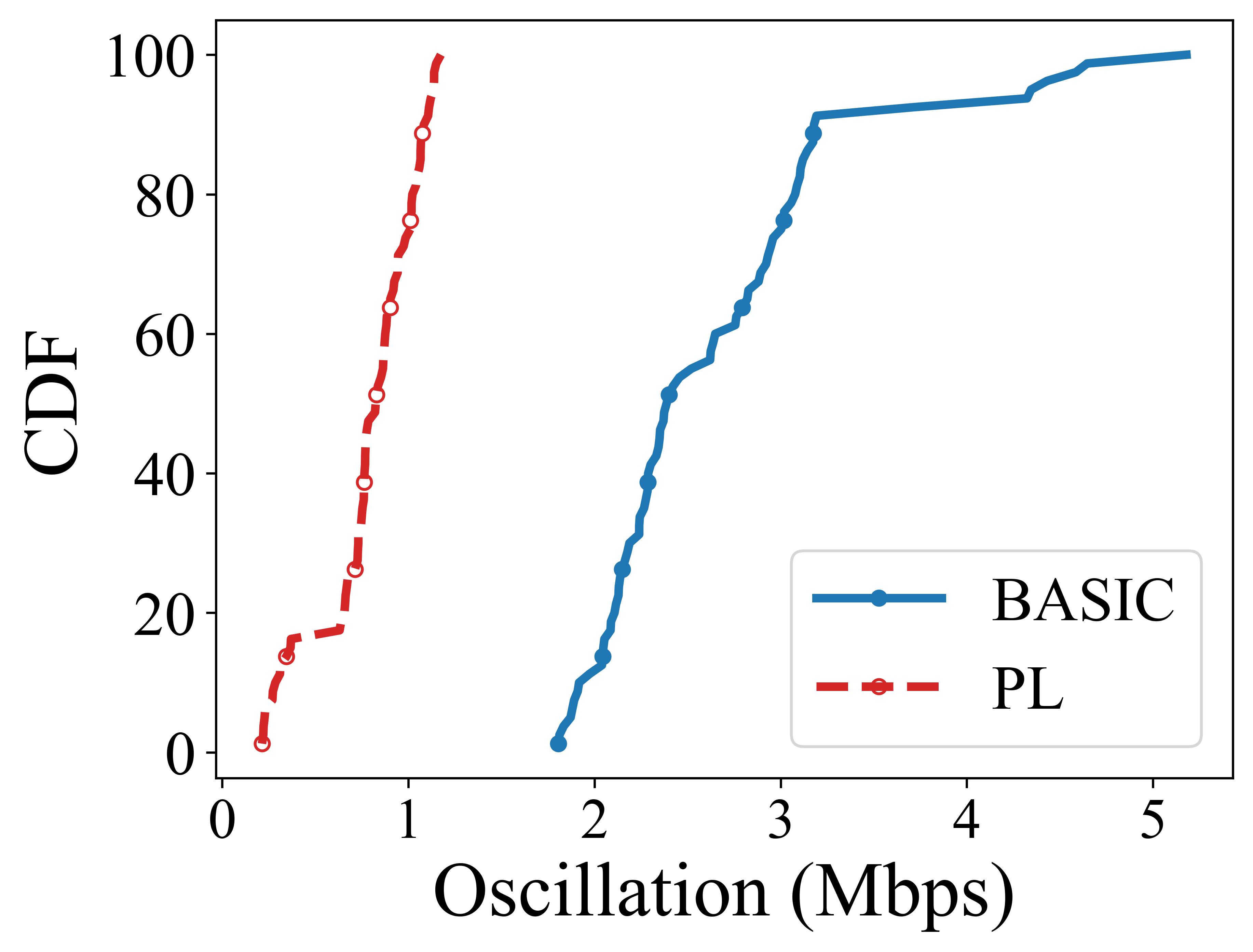}}
	\caption{The CDF of the average bitrate of any downloaded tile (left), reaction time (middle), and oscillation (right) of basic \algName and \hrPL using real network and head movement traces. \hrPL reduces the oscillation and reaction time by $70.9\%$ and $67.8\%$ respectively. }
	\label{fig:heuristic_set1}

\end{figure*}

\begin{figure*}[t]
	\centering
    
	\subfigure{\includegraphics[width=0.32\textwidth]{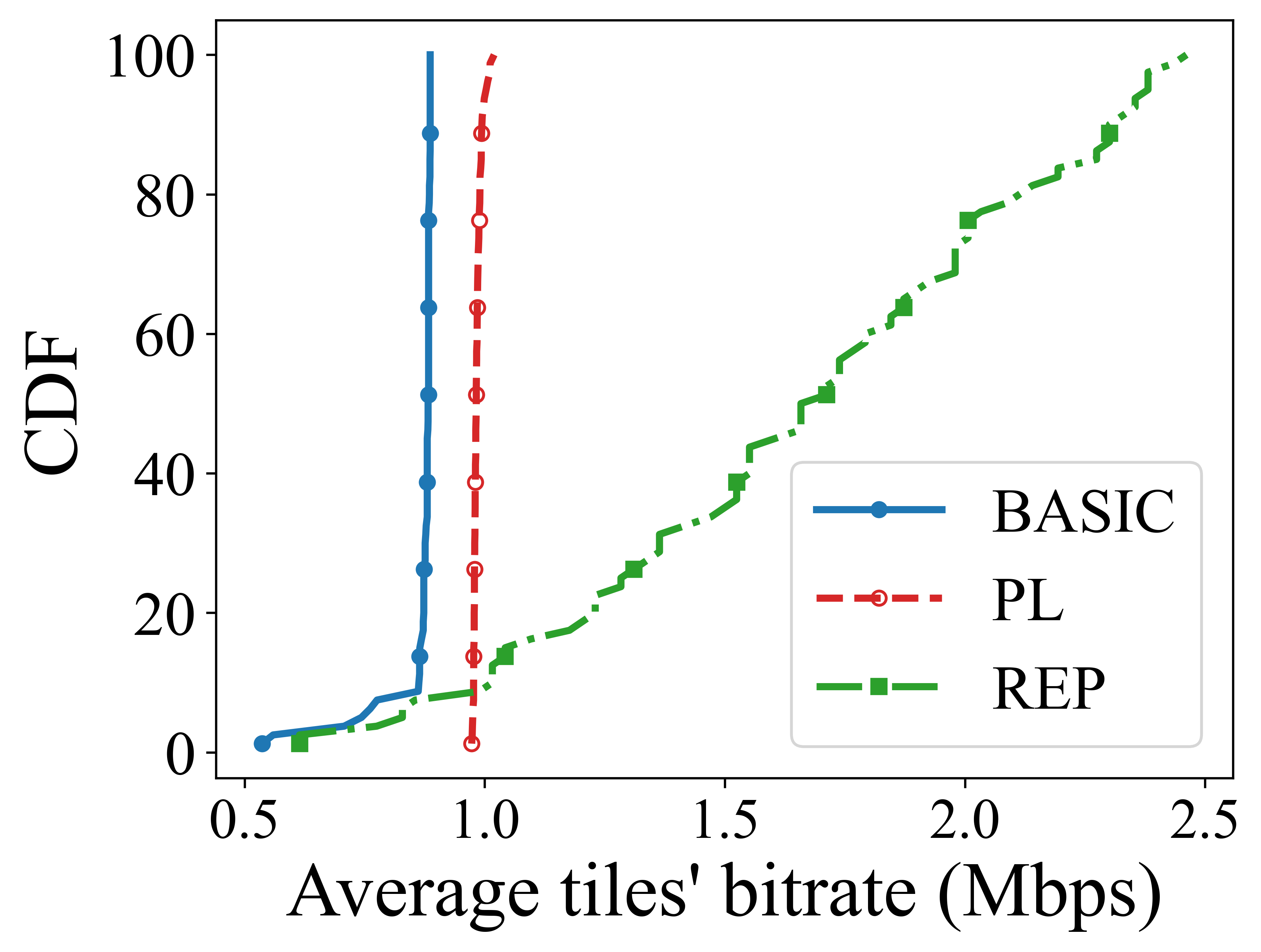}}\hspace{1mm}
	\subfigure{\includegraphics[width=0.32\textwidth]{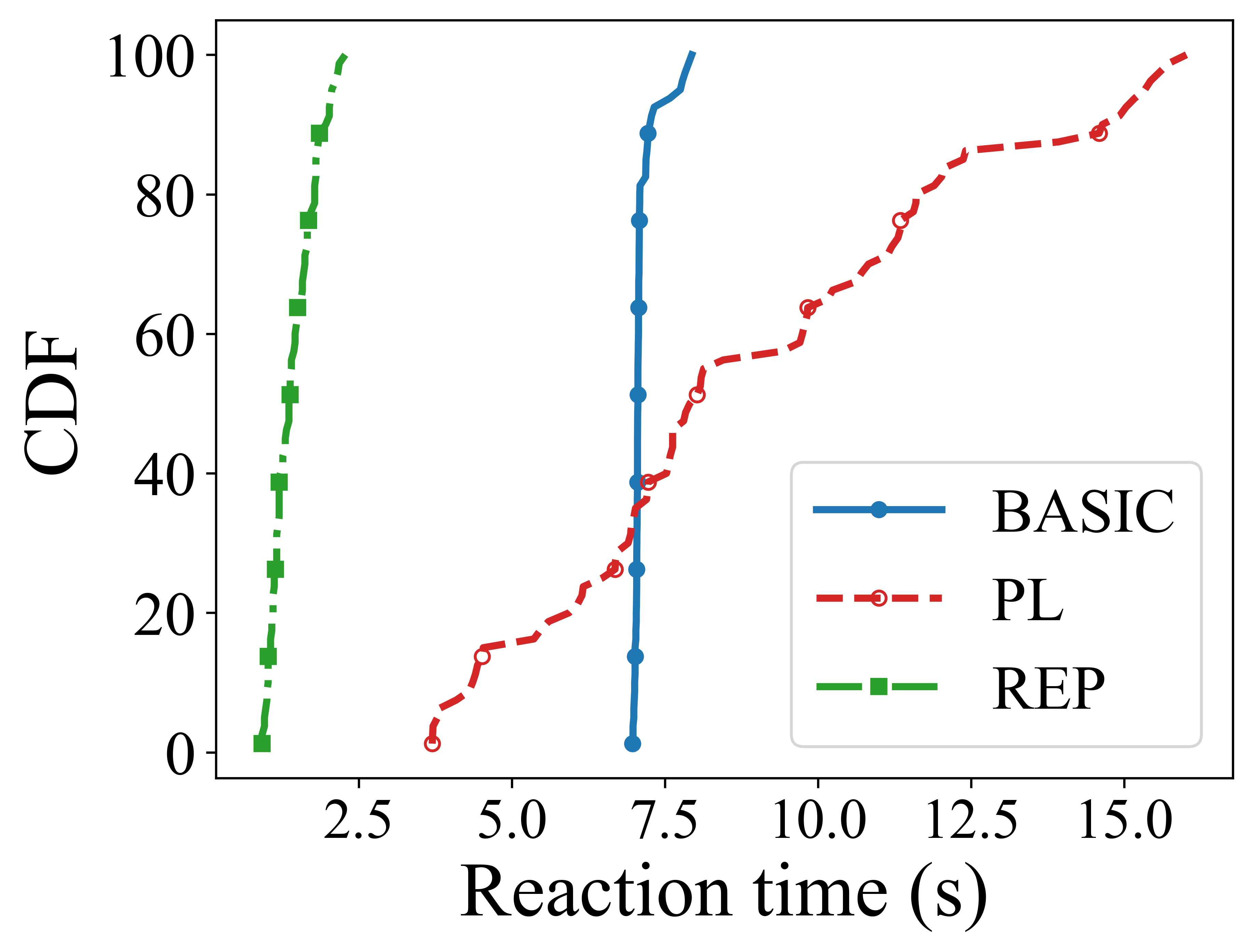}}\hspace{1mm}
	\subfigure{\includegraphics[width=0.32\textwidth]{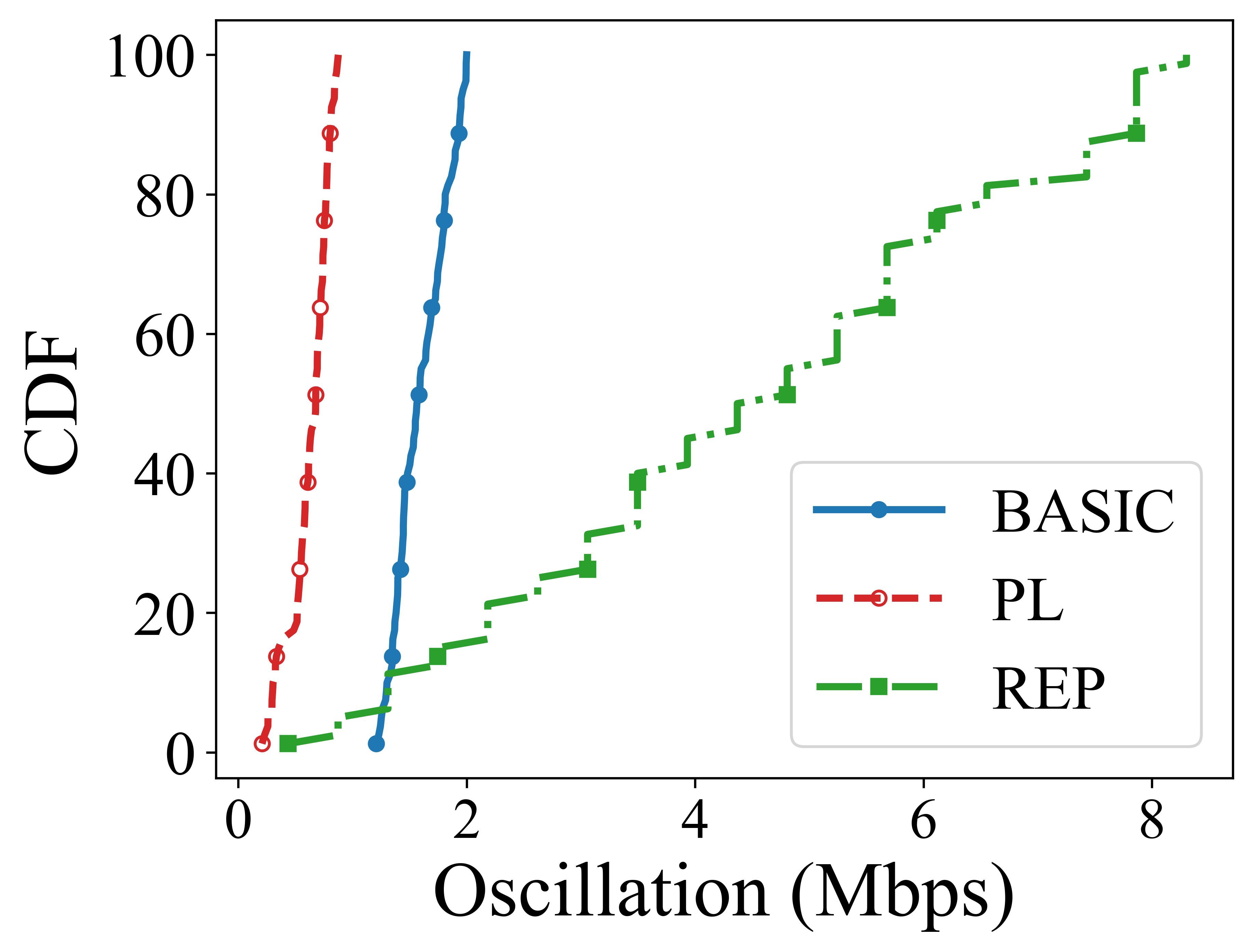}}
	\caption{The CDF of average bitrate of downloaded tiles (left), reaction time (middle), and oscillation (right) of basic \algName, \hrPL, and \hrREP using real network and head movement traces while the prediction of the head position dynamically got updated.  \hrREP improves the average bitrate of downloaded tiles up to $91.2\%$ compared to \algName BASIC, and reduces the reaction time by $80.0\%$. 
 }
	\label{fig:heuristic_dynamicProbs}
\end{figure*}

Figure~\ref{fig:heuristic_set1} shows the CDF plots of the average tiles' bitrate (left), the reaction time (middle), and oscillation (right, the average difference between the bitrate of two consecutive chunk for each tile) of 100 trials for accurate head position probability predictions. The results show that \hrPL significantly reduces the oscillation and reaction time of \algName. Since the \hrPL improves the bitrate of tiles during start and seek time, and these tiles are a low fraction of the entire video, the average bitrate of tiles that \hrPL prepared for the user is slightly better than the average bitrate of tiles \algName downloads. 

In Figure~\ref{fig:heuristic_dynamicProbs}, we report the result of the evaluation of \algName and heuristic versions against the noisy prediction of head positions. Specifically, we report the CDF plot of the average tiles' bitrate, reaction time, and the oscillation of \algName, \hrPL, and \hrREP. The results show that the average bitrate of \algName and \hrPL reduced compared to the case where accurate head position probabilities were available. On the other hand, \hrREP improves the average bitrate of \algName up to near $97.6\%$ and reduces the reaction time of \algName by $80.0\%$. Although \hrREP  could improve the average bitrate and the reaction time, it increases the oscillation. The average oscillation time for \algName was $1.6$ seconds, while this value for \hrREP  was $4.5$ seconds. Meanwhile, all two heuristic versions could keep the rebuffering as low as the rebuffering of \algName.

\textbf{Key takeaway.} Each extension of \algName improves the performance in certain aspects, such as bitrate or reaction time. However, each version has drawbacks that may result in lower performance in other aspects. Therefore, no version outperforms the others in all aspects, and depending on the application and user requirements, different versions may be suitable.

\section{Related Work}
\label{sec:relwork}

The prior literature extensively addresses the problem of bitrate and view adaptation in \threesixty video streaming. Previous works commonly employ various machine learning techniques  to predict user head movements and network throughput, incorporating these predictions into ABR algorithms~\cite{guo2024long, qian2018flare, li2023spherical, zhu2023toward, wang2024synergistic, zhang2023tile,setayesh2023content,yang2024360spred,wan2024predicting,yaqoob2023collaborative}. For example, \cite{qian2018flare} proposes a prediction-based approach and designs an ABR algorithm using historical data from \threesixty video streaming sessions. The focus of their work is on head movement prediction, while the ABR algorithm itself is a heuristic approach lacking rigorous optimization-based mechanisms.


Authors in~\cite{zhang2023ram360} propose a Lyapunov-based model to solve the \tABR problem, also utilized in~\cite{shen2019qoe}. They employ Lyapunov optimization for selecting and adjusting the bitrate of tiles, resulting in a nearly optimal ABR algorithm achieved through iterative updates to the tiles of a chunk. However, despite its near-optimality, this technique may suffer from a long reaction time and high wasted bandwidth due to iterative bitrate adjustment.
In another work,~\cite{jiang2019hierarchical} proposes a different approach by constructing a two-layered hierarchical buffer-based algorithm with short and long buffer layers. The prediction of \AF is used to perform short-term improvement. The long buffer layer tries to download the new tiles that are not available in the short buffer layer and will be played later. In another work,~\cite{park2019advancing} predicts the head movement by using a saliency map, tile probability heat map, and LSTM models and gives \tABR algorithm based on that.

In another category of work~\cite{zhang2019drl360,park2021adaptive,fu2021360hrl,kan2021rapt360, kim2020neural, wang2024madrl}, deep RL-based algorithms are developed for solving \tABR. They also use a dataset of the user's head position to train the model and find the optimal bitrate selection according to the predicted \AF. 

In a recent study~\cite{guo2024adaptive}, the authors proposed a non-uniform coding and transmission method that divides the entire 360-degree field into four regions: an attention area, an out-of-field-of-view area, a peripheral area, and other areas. They then introduced an online ABR algorithm that dynamically selects appropriate bitrates for each region. This work was further extended by another study~\cite{guo2024long} that optimized the transmission strategy for 360-degree videos using LSTM networks.

In \cite{yuan2019spatial}, \AF prediction is used to select proper bitrates for tiles in a predicted \AF, with the accuracy of prediction impacting the final bitrate selection. Other works such as \cite{nguyen2018your, wang2022salientvr, xie2018cls, petrangeli2017http, xie2017360probdash} also focus on \AF prediction. The main idea is that users have similar region-of-interest when watching the same video. They divide the users into clusters such that users inside each cluster have similar region-of-interest in most videos. Then they give \AF prediction based on the cluster of a given user and the historical head direction traces of users in a predicted cluster. While these approaches help reduce bandwidth waste, they still require an ABR algorithm to select bitrates within the predicted region. In contrast, \algName is an online algorithm with rigorous performance guarantees, solving the \tABR problem optimally.
Guan et al. \cite{guan2019pano} employ Model Predictive Control (MPC) to select the aggregate bitrate for a chunk, allocating it among tiles to maintain quality within the limited bitrate. In another category of research~\cite{he2018rubiks, rudow2023tambur}, an optimized coding/encoding algorithm minimizes bandwidth usage for \threesixty videos, evaluated using real 4K and 8K videos from YouTube. Their experiments use a straightforward ABR algorithm resembling \PDash (Section~\ref{sec:exp}).

\section{Conclusion and Future Directions}
\label{sec:conc}
In this paper, we formulated an optimization problem to maximize users' QoE in \threesixty video streaming applications. We proposed \algName, an online algorithm that achieves a provably near-optimal solution by selecting a proper bitrate for each tile of a \threesixty video to maximize quality while minimizing rebuffering rate. Our experimental results demonstrate that \algName outperforms several alternative algorithms across various network and head movement profiles. In future work, we aim to develop a data-driven and robust version of \algName that explicitly uses future predictions in decision-making while maintaining the algorithm's theoretical performance guarantees.


\section*{Acknowledgments}
This research was supported in part by NSF grants CAREER 2045641, CPS-2136199, CNS-2106299, CNS-2102963, CSR-1763617, CNS-2106463, and CNS-1901137. We acknowledge their financial assistance in making this project possible.

\bibliographystyle{unsrt}  
 \bibliography{references.bib}

\Urlmuskip=0mu plus 1mu\relax



\end{document}